\newcommand{\RR}{\mathbb{R}}
\newcommand{\eps}{\varepsilon}
\newcommand{\E}{\mathbb{E}}
\newsavebox{\inhibit}
\begin{document}
\savebox{\inhibit}{A \begin{tikzpicture} \draw[-{Bar[width=6pt]}](0,0) -- (0.85em,0);\end{tikzpicture} B}

\title{Approximating a gene regulatory network from non-sequential data}
%
%
\author{Pratik Worah\inst{1}\and
Cliff Stein\inst{2}}
%
%
\institute{NYU and Google Research \and
Columbia U. and Google Research}
\maketitle              
\begin{abstract}
Given non-sequential snapshots from instances of a dynamical system, we design a compressed sensing based algorithm that reconstructs the dynamical system. On the theoretical side, we show that: (1) successful reconstruction is possible under the assumption that we can construct an approximate clock from a subset of the coordinates of the underlying system, and (2) computing the minimal Lyapunov exponent of the dynamical system, where the minimum is taken over all subsets of coordinates of the dynamical system, equates to computing a min-max equilibrium. We design an efficient randomized algorithm for computing the above equilibrium.

As an application of our theoretical results, we reconstruct the underlying dynamical system from publicly available RNA-seq data to: (1) predict the underlying gene regulatory networks (as opposed to individual genes) that may help differentiate between metastatic vs non-metastatic breast cancer (and also colorectal cancer), and (2) identify candidate genes that could be used as target biomarkers for basket trials. In particular, our in silico analysis suggests that RORC agonists, which are already used in colorectal cancer therapies, may be worth investigating for breast cancers.

\end{abstract}
\clearpage
\section{Introduction}

Reconstructing a noisy linear dynamical system from sequential observations to predict its next step is a basic problem in filtering theory. In this paper, we consider the case when the order of observations has been partially lost. Such a situation arises often in biological samples, where although one can collect (aggregate) information about cells, it is impossible to know perfectly where the cells are in their cell cycle relative to each other. In particular, the relative abundance of bulk RNA-seq data over single cell RNA-seq data in clinical settings was a motivation for formulating the problem. Note that the latter can be temporally ordered in a reasonably error free manner using latent time, but it is not obvious whether the samples in the former can be ordered as well.
Our contributions may be viewed from a theoretical lens -- abstract algorithms with applications to computational biology, or from a genomic lens -- isolating underlying pathological pathways and common biomarkers from various cancers for designing basket trials.


\subsection{Theoretical contribution}

Consider a linear approximation of a dynamical system: $\dot{x}=Bx+\dot{\eta}$, where $\dot{x}$ denotes the time derivative of $x\in\RR^n$, $B\in\RR^{n\times n}$ is the matrix representing the interactions between the variables, and $\dot{\eta}$ is a white noise term (assumed not too large). If we had a set of ordered samples $x(t),x(t+h),x(t+2h),...$ then we could recover the matrix $B$ from these samples by first computing the discrete time derivative to recover $\dot{x}$, and then minimizing an appropriate loss ($\ell_1$ loss, for example). If the samples are not ordered, or if we don't have access to the timestamps $t,t+h,t+2h,..$, then it is not possible to recover the interaction matrix $B$. However, it is natural to ask, if the sample ordering is known up to some small error, for example a small fraction of the ordered samples have been flipped, then can we recover the dynamical system?

We show that one can still recover $B$ by minimizing the $\ell_1$ loss. We denote this access to a small error sample ordering as access to an approximate clock.\footnote{The terminology comes from our application to genomics, where certain proteins act as approximate clocks.} Under the assumption that we have access to an approximate clock, Algorithm~\ref{algmain} in Section~\ref{ovw:sec} allows us to approximately reconstruct the interaction matrix $B$ from the samples. Moreover, we bound the error in approximation of $B$ from Algorithm~\ref{algmain} using Theorems~\ref{uniqueness:thm} and~\ref{clock:thm}. The error bound in Theorem~\ref{clock:thm} is in terms of the amount of error from the actual sample ordering in the approximate clock. Thus we rely on a combination of techniques from stochastic calculus and compressed sensing to show that the second moment of increments of the underlying dynamical system is robust under errors in ordering (see Theorem~\ref{clock:thm}) and it suffices to reconstruct the underlying sparse dynamical system (Theorem~\ref{uniqueness:thm}). This combination of techniques should be novel and interesting.\footnote{Although there is some novelty in our modification of the basis pursuit algorithm to enforce row sparsity of matrices (see Theorem~\ref{uniqueness:thm}), instead of sparsity of vectors,
we do not claim significant theoretical novelty over the proof techniques of~\cite{dirk} in terms of necessary and sufficient criteria for reconstruction of sparse signals.} The main utility of compressed sensing methods in Algorithm~\ref{algmain} is to ensure that reconstruction can be successful, even with small sample sizes (relative to underlying dimension, i.e., $n$), which is often the case with RNA-seq data -- far fewer patients than number of genes sequenced. Our main application involves reconstructing dynamical systems and predicting pathways responsible for metastasis in breast and colorectal cancers (see Figure~\ref{paths:fig}, and Sections~\ref{gene5:sec} and~\ref{gene2:sec} for details).\\

\noindent Suppose we successfully reconstruct interaction matrices $B$ and $B'$ of two different dynamical systems using Algorithm~\ref{algmain}, a natural question is: Do the two given linear dynamical systems evolve differently over a long period of time? More concretely, is there a subset of coordinates $S$ ($S\subset[n]$) such that the orbit corresponding to $B'$ projected to $S$ converges to some equilibrium point, but the orbit of $B$ does not converge to any point? 

One metric to compare similarity of trajectories is the Lyapunov exponent, which measures whether the long term trajectory of a dynamical system diverges from, or converges to, an equilibrium point. Thus if we find coordinates $S\subset [n]$ such that the Lyapunov exponent corresponding to $B$ (after projection to $S$) is positive and large, but the Lyapunov exponent corresponding to $B'$ (after projection to $S$) is significantly smaller than that of $B$ on $S$, then we know that the two dynamical systems have very different long term trajectories on the $S$ coordinates. Since the computation of a Lyapunov exponent for any linear system corresponds to computing the maximum eigenvalue, one needs to compute the minimum (over subsets $S\subseteq[n]$) of the maximum eigenvalue of $B'+B'^T$ after projection; such that the maximum eigenvalue of $B+B^T$ is not small, after projection to the coordinates $S$. Two observations are in order:
\begin{enumerate}
    \item Computing the equilibrium of the min-max objective involves computing a minimum over a potentially exponential number of projections corresponding to subsets of $[n]$ -- a non-trivial optimization problem
    \item As $|S|$ increases, the maximum eigenvalue of $B'+B'^T$ will be non-decreasing; if $|S|$ decreases, then the maximum eigenvalue of $B+B^T$ will be non-increasing. Therefore, the optimal value of $|S|$ may be expected to be neither too small nor too large (see Figure~\ref{paths:fig}(c)). 
\end{enumerate}
In Section~\ref{gene4:sec}, we show that a concrete version of the min-max optimization problem is efficiently solvable by using semidefinite programming (SDP) duality to obtain a SDP relaxation of the problem. We then round the vector solution of the SDP relaxation to a $0$-$1$ integer solution which gives the optimal $S$ and the distance between corresponding Lyapunov exponents (see Algorithm~\ref{basket:algo}). In Theorem~\ref{basket:thm}, we show that the objective value computed for the SDP relaxation is close to the equilibrium value in expectation. We illustrate one application of the above algorithm: predicting biomarkers for the design of a subclass of clinical trials -- basket trials. The idea being to identify common biomarkers between different cancers, so that therapies for one can be translated to another (see Section~\ref{gene4:sec} for more details). Currently, such biomarkers are discovered from experiments by trial and error.



\subsection{Genomics contribution}\label{gene5:sec}
In this section, we summarize an application of of our theoretical results above to genomic datasets. Sections~\ref{gene2:sec},~\ref{gene4:sec} and Supplement~\ref{sbs:asump} elaborate on our description here. We use the RNA sequencing data on breast cancer~\cite{metabric},~\cite{metabric2},~\cite{metabric3}, and colorectal cancer~\cite{rc} for our experiments.\footnote{About the datasets: The datasets are available at the cbio portal (\url{https://www.cbioportal.org/}). The breast cancer dataset contains RNA sequencing data (whole genome from tumor biopsy) from about 2,000 cases, such that half  have lymph node involvement (metastasis) and half don't. The colorectal cancer dataset contains RNA sequencing data (whole genome from tumor biopsy) from about 250 colorectal cancer cases, about 60 of which are non-metastatic (grade T2 or below), while the rest have metastasis (grade T3 or above).} Our main contribution, from the perspective of a genomic lens, is an in silico prediction: RORC agonists, which are known to be effective in some colorectal cancers~\cite{rorc-rx,rorc-rx2}, are likely to be effective in some breast cancer cases as well (see also~\cite{rorc-bc-good, rorc-rx-wbc-ct}). More concretely, the steps in the application of our algorithms to obtain the aforementioned conclusion can be summarized as follows:
\begin{enumerate}
    \item Using Algorithm~\ref{algmain}, which assumes the existence of an approximate clock, we reconstruct the noisy linear dynamical system of the underlying gene regulatory network (GRN) for non-metastatic breast and colorectal cancer gene networks (see for e.g., Figures~\ref{dyn:fig}(a) and~\ref{dyn2:fig}(a)), and the corresponding perturbations that lead to metastasis (see for e.g., Figures~\ref{dyn:fig}(b),~\ref{dyn2:fig}(b)).\footnote{We restrict our networks to nuclear receptors for computational limitations, but the analysis can be carried for arbitrary sets of genes with bulk RNA sequencing data.} The recovered networks are not easily interpretable due to their size (they belong to a $2500$ dimensional space for the $50$ odd nuclear receptors that we consider).
    \item We analyze the perturbations (from step (1)), using Algorithm~\ref{basket:algo}, to recover a subset of genes (denoted $\mathcal{C}$), for which the gene expressions for metastatic breast and colorectal cancers will follow similar trajectories, but the evolution of gene expressions of the same set of genes for non-metastatic cases is substantially different for the two cancers (see the discussion in Section~\ref{gene4:sec} for details). For breast and colorectal cancers, some of the genes in $\mathcal{C}$ are: RORC, ESR2, NR2F1 (they correspond to the darker areas in the heatmap in Figure~\ref{paths:fig}(a)).
    \item Next, we compute the genomic interactions (pathways), using Algorithm~\ref{path:alg}, that are predicted to be critical for metastasis for breast and colorectal cancers {\em separately}. From the computed pathways, we select the top five predicted pathways that contain one or more genes in $\mathcal{C}$ (step (2)). Since $\mathcal{C}$ is supposed to identify genes that behave similarly in breast and colorectal cancer metastasis, one expects to find the common critical interactions from this exercise. Indeed, all five of the resulting pathways for colorectal cancer metastasis were mediated via RORC inhibition, and four of the five pathways for breast cancer metastasis were mediated by RORC inhibition  (see Figure~\ref{paths:fig}(c,d)).  Thus, we identified RORC inhibition as a common factor in both cancers, even though its exact role may differ for the two cancers. RORC agonists have been investigated for colorectal and other cancer therapies~\cite{rorc-rx, rorc-rx2}. Our in silico results should motivate further investigation of RORC agonists in breast cancer treatment (see also~\cite{rorc-bc-good, rorc-rx-wbc-ct}).
    \item Finally, in Section~\ref{sbs:asump}, we verify that a weighted average of cyclin expressions\footnote{These are a family of cellular proteins that regulate the cell cycle, so their concentrations peak at the precise points of time in the cell cycle~\cite{cccp}. For example, Cyclin A is active in the S phase, while Cyclin D is active in the transition from the G1 to S phase.} can be used as a sufficiently accurate approximate clock in Algorithm~\ref{algmain}. We use single cell data from~\cite{Bastdas-Ponce} and~\cite{dg}, where the latent time from scVelo~\cite{bergen} acts as our ground truth against which we can compare our approximate clock. We show that the cyclin based approximate clock will recover the same covariance matrix, and hence the linear approximation of the gene network, as the latent time based clock (see also Remark~\ref{crit:rmk} and Theorem~\ref{clock:thm}).
\end{enumerate}


\begin{figure}[htb!]
    \centering
    \begin{subfigure}[b]{0.45\textwidth}
        \centering
        \includegraphics[height=2.3in]{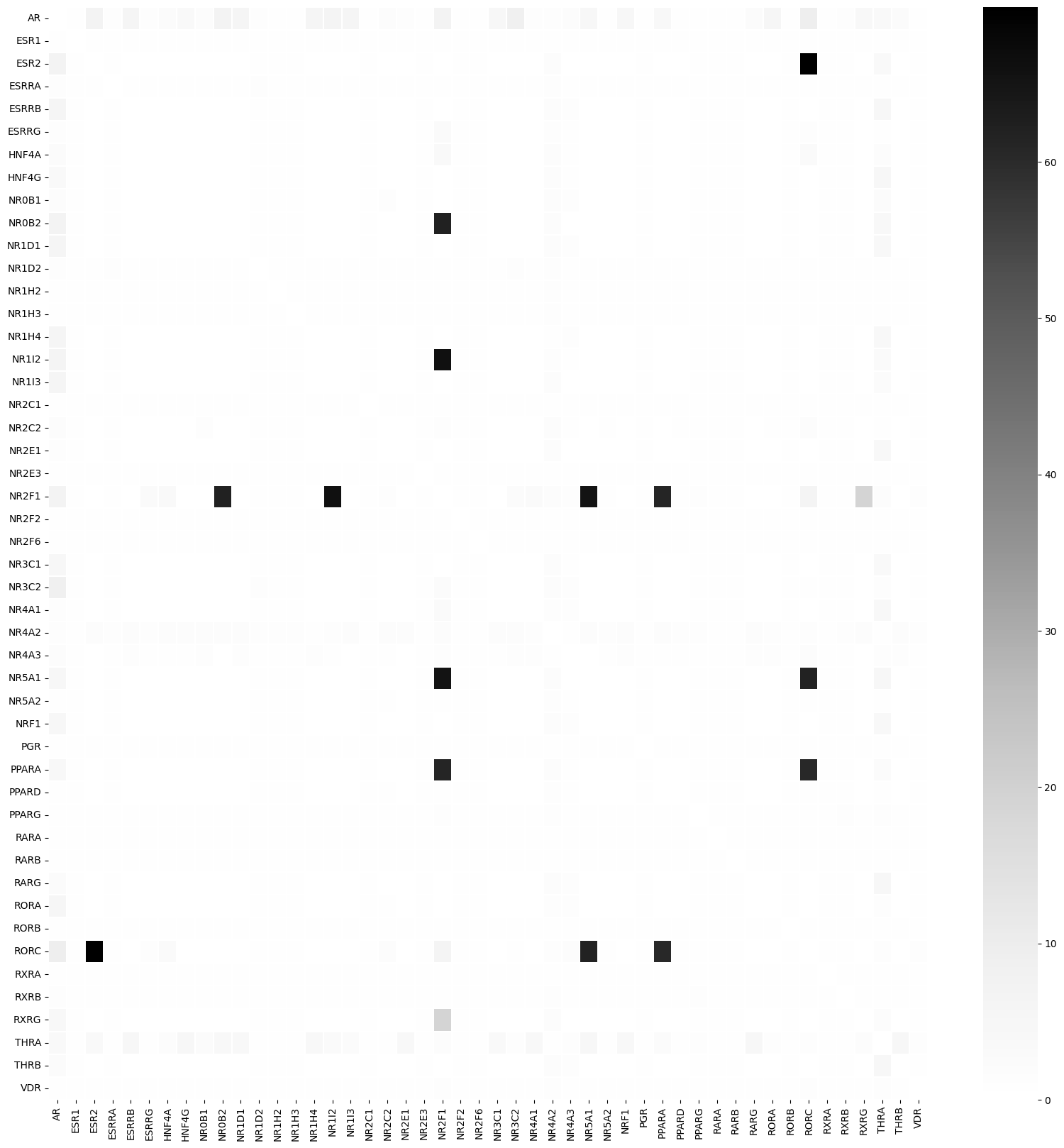}
        \caption{Darker areas identify pairs of genes predicted to have similar expression trajectories in breast and colorectal cancer metastasis. Genes like RORC and NR2F1 could play critical roles in metastasis in both cancers (though their roles may not be the same).}
    \end{subfigure}
    ~
    \begin{subfigure}[b]{0.45\textwidth}
        \centering
        \includegraphics[height=1.5in]{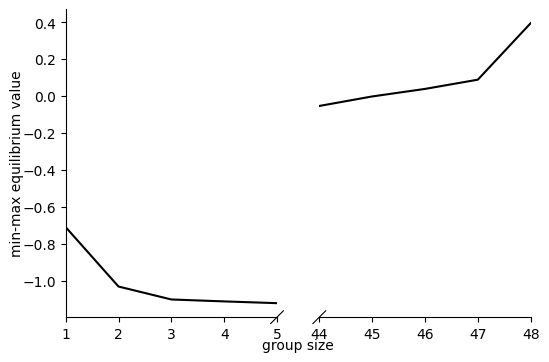}
        \caption{The x-axis shows the dimension $\in[1,50]$, and the y-axis measures the equilibrium value (see Equation~\ref{eqn:obj_lyp}) corresponding to similarity in trajectories between metastatic breast and colorectal cancers relative to their non-metastatic counterparts. Smaller (negative) values imply higher similarity. Thus, there exist local similarities in gene expressions in metastasis, even though globally breast and colorectal cancers have very different gene expression trajectories (as expected).}
    \end{subfigure}
    \centering
    \begin{subfigure}[t]{0.45\textwidth}
        \centering
        \includegraphics[height=1.9in]{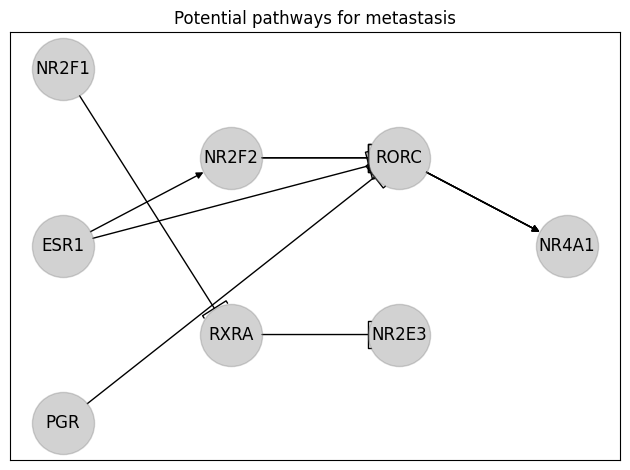}
        \caption{Predicted pathways for metastasis in breast cancer. Note that evidence exists that shows up-regulation of NR4A1 slows disease progression~\cite{nr4a1}.}
    \end{subfigure}%
    ~ 
    \begin{subfigure}[t]{0.45\textwidth}
        \centering
        \includegraphics[height=1.9in]{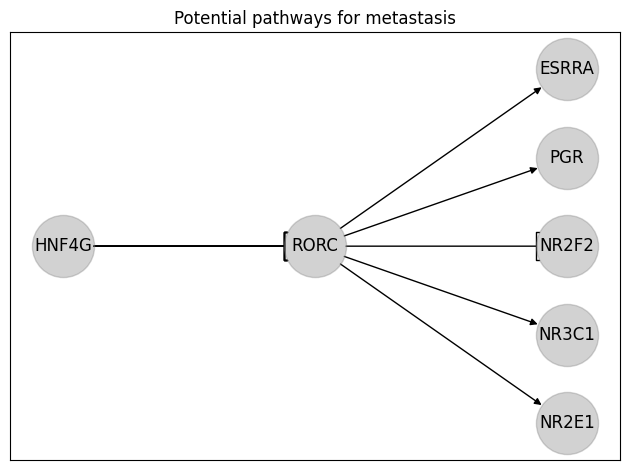}
        \caption{Predicted pathways for metastasis in colorectal cancer. HNF4G expression is known to lead to poor prognosis~\cite{hnf4g}.}
    \end{subfigure}
    \caption{Top: The genes in $\mathcal{C}$ obtained using Algorithm~\ref{basket:algo}. See Section~\ref{gene4:sec} for details. Bottom: Predicted pathways where gene A $\to$ gene B means that A (indirectly) up-regulates B, while {\usebox\inhibit} indicates (indirect) down-regulation. Based on the Bottom Right figure, one potential pathway for metastasis in colorectal cancer could be: HNF4G down-regulates RORC which down-regulates NR2F2. Thus our algorithm suggests that HNF4G eventually up-regulates NR2F2 via a pathway involving RORC. It is known that increased expression of NR2F2 is associated with metastasis in colorectal cancers~\cite{nr2f2}. See Section~\ref{gene2:sec} for details.}\label{paths:fig}
\end{figure}
\noindent Our computations implicitly "average" over inhomogeneous data, and below we discuss resulting limitations.

First, using an approximate clock allows us to overcome the limitation that RNA-seq data is not ordered. However, bulk RNA-seq of tumor specimens contains contributions from many different cell types (cancer cells, stromal cells, immune cells etc). When we reconstruct the approximate GRN (using Algorithm~\ref{algmain}), and use it for predicting critical interactions among genes, we are using the averaged expression data in the different cell types in the tumor, as opposed to any specific cell type. Thus our reconstructed model of the GRN does not capture the GRN within any single cell type, and any conclusions drawn from it are not for any particular cell type. Such "averaged" models are often referred to as mean-field approximations, and we discuss them further in Section~\ref{sbs:asump}.

Second, our in silico results are meant to illustrate the potential of our algorithms . Often the role played by receptors, like RORC, in cancer progression is complex (see for e.g.~\cite{ror-svy}), and using different subcohorts of patients can lead to very different outcomes, due to the inhomogeneous nature of cancer pathways across patients. Using the entire dataset, as we did to obtain Figure~\ref{paths:fig}, can be viewed as "averaging" over many different inhomogeneous pathways from different patient subcohorts. Thus the results obtained may not be true for any given subset of patients. Nevertheless, the fact that we were able to independently predict RORC inhibition as important for disease progression, illustrates that our algorithms may be effective in identifying critical gene targets and pathways, especially if the data is reasonably homogeneous.

\subsection{Related work}
{\em Modeling GRNs:} Understanding transcriptional regulation -- which genes directly regulate a given gene -- equates to the reconstruction of GRNs  from gene expression snapshots. The latter can be measured using RNA-seq data. There is a large volume of work in this area. The surveys~\cite{grn-svy2}, ~\cite{grn-svy} and~\cite{grn-svy3} cover the tools and techniques employed in the study of GRNs. As in our paper, Markov processes have been used to model gene regulation before (see for example~\cite{grn1},~\cite{grn2},~\cite{raj}, and~\cite{elowitz}). The paper~\cite{grn1} gives a detailed overview of modeling transcription regulation by stochastic differential equations (SDEs); and the Ornstein-Uhlenbeck process that we employ in this paper is just a linear approximation of such SDE models that is analytically tractable and yet expressive enough to capture first order behavior. 

The algorithms for reconstructing GRNs fall under two broad classes based on their input: (1) more recent algorithms work with single cell data as input (for e.g.~\cite{scenic}), while (2) older work (for e.g.~\cite{arcane},~\cite{genie}) uses gene expression snapshots as the input. Due to the abundance of unordered RNA-seq data vs single cell data in clinical settings, we have chosen to work with the former. However, the results that use unordered RNA-seq data (like~\cite{cluster},~\cite{graph_model1},~\cite{arcane} and~\cite{genie}) do not formally recover the dynamical system; they recover a graphical approximation of the GRN. The crucial difference between the two is that, while there are various ways to define edge weights in any graphical representation (using mutual information~\cite{arcane}, using feature selection~\cite{genie} etc) and that leads to many different ad-hoc similarity measures between such graphs. On the other hand, there is only one unique linear dynamical system approximation of the underlying Markov process for gene regulation. In this paper, we recover an approximation of that representation under the assumptions in Section~\ref{sbs:asump}. One advantage of recovering the underlying dynamical system is that one can compare similarity in the evolution of trajectories between two such dynamical systems, which corresponds to trajectory of gene expressions. Contrast that with graphical representations, where the walks over such a graph do not necessarily correspond to evolution of trajectories of gene expressions, i.e., the temporal information is not captured.

\noindent {\em Relation to trajectory inference:} Our approximate cyclin based ordering problem is related to the problem of trajectory inference, which has been well explored in {\em single cell} RNA-seq literature~\cite{clk-svy}. Trajectory inference, as opposed to just computing similarity between trajectories via Lyapunov exponents (as in this paper), is likely a harder problem to solve. This is because, we only need to order cell samples well enough to approximately recover a specific covariance matrix in our algorithms, but in the trajectory inference problem one needs to order the cell samples in the correct sequence to predict cellular development. It is possible that cyclin expressions themselves may be effected in many situations by the pathology under consideration (see for e.g.~\cite{fard}). In such cases, our algorithms can not be applied directly with cyclins to construct the approximate clock, but it may be possible to use a different set of gene expressions related to cell division, upstream from cyclins, that are verified to be not effected by the pathology, to construct the approximate clock, on a case by case basis.\\
{\em Connections to basket trials:} A basket trial is a clinical trial aims to group patients by their shared biomarker, as opposed to a disease, and gives them the same treatment that targets the shared biomarker (for e.g.~\cite{nci-match}). Such trials have been critical in discovering novel therapies like anti-PD1 therapies (for e.g.~\cite{pd1}). Algorithm~\ref{basket:algo} provides a way to predict biomarkers for basket trials. For example, in Figure~\ref{paths:fig}(c,d), we predict that RORC inhibition is common to breast and colorectal cancer metastasis, and RORC agonists have been investigated for colorectal cancers~\cite{rorc-rx2}. This raises the prospect of investigating RORC agonists for breast cancer as well (see also~\cite{rorc-bc-good}). We believe our approach can be useful for predicting biomarker targets across different cancers. 
\noindent {\em Relation to filtering:} Perhaps the line of work closest to Algorithm~\ref{algmain} is Kalman filtering in the presence of network loss and delays (see~\cite{yang},~\cite{liu} and~\cite{nikfe}, and the references therein). However,  such results  typically assume a independent identical probability of packet drop or delay, sometimes modeled by a Markov process of interest. Also, although the idea of computing conditioned Gaussian distributions (for the error upper-bound in Theorem~\ref{clock:thm}) is similar to that used for the analysis of the Kalman filter, the problem studied in our paper differs from the standard filtering problem because we are interested in recovering the underlying linear system and not in predicting the next step of the system.\\
\noindent {\em Relation to min-max principles:} Perhaps the problem closest to Algorithm~\ref{basket:algo} (and Equation~\ref{eqn:obj_lyp}) are versions of the well-known Courant-Fisher min-max principle, which involves a direct computation of the eigenvalues of symmetric matrices by repeatedly computing orthogonal subspaces. However, in our problem, we can not choose the orthogonal subspaces iteratively -- the basis is fixed, and we need to compute a single subspace corresponding to a subset of the $[n]$ variables in Equation~\ref{eqn:obj_lyp}.


\section{Technical contributions}\label{ovw:sec}
In this section, we provide the abstract algorithms (Subsections~\ref{th1:sbs} and~\ref{th2:sbs}), and elaborate on their genomics applications (summarized in Subsection~\ref{gene5:sec}): step (1) in Subsection~\ref{gene2:sec}, while steps (2) and (3) in Subsection~\ref{gene4:sec}. The supplement contains the proofs (Sections~\ref{thm:sec} and~\ref{thm2:sec}), and empirical verification of assumptions (Sections~\ref{sec:synth} and~\ref{app:sec}).
\begin{definition}
Suppose we have a $n$-dimensional Ornstein-Uhlenbeck (OU) process 
\begin{equation}\label{oup:dfn}
    dx_t=Bx_tdt+AdW_t,
\end{equation}
where $W_t$ is $n$-dimensional standard Brownian motion, $A$ is the {\em diffusivity} matrix, $AA^T$ is the {\em covariance matrix of increments}, and $B$ the {\em drift} matrix. (see for example~\cite{oks})
\end{definition}
Suppose that we observe $x_t$ at various time points in $[0,T]$ for large $T$, but we either do not have the ordering information or do not keep the samples ordered. Thus the covariance matrix of the increments of $x_t$ is not assumed to be known in this paper, since it requires ordering information for its computation. Then, we want to know whether we can recover the underlying (unknown) $n\times n$ drift matrix $B$, that characterizes our dynamical system.\footnote{We have assumed the system is centered about $0$ but that is not an issue since the long term average can be computed without knowing the ordering and subtracted from the observations.}

\begin{remark}\label{crit:rmk}
If we are given an ordering of $x_{t}$s (correct or with errors) as $(x_{t_0},x_{t_1},...,x_{t_m})$ then we may compute covariance matrix of increments, with respect to that ordering, as the average $\frac{1}{m}\sum_{k=1}^m(x_{t_k} - x_{t_{k-1}})^T(x_{t_k} - x_{t_{k-1}})$. If $t_0\le t_1..\le t_m$ then $AA^T$ is the empirical covariance matrix of increments and is closely approximated by the average, by Sanov type concentration theorems (See~\cite{dz}). Moreover, being an average, this matrix is fairly robust to noise and small errors; a property that is critical for the algorithms in this paper.
\end{remark}

In general, reconstructing $B$ without ordering information for the $x_t$ may not be possible. However, if: (1) there exists an "approximate clock" (see Definition~\ref{appx-clk:dfn}) and (2) $B$ is row sparse, i.e., most of the entries in each row of $B$ are zero or close to zero; then we can approximately reconstruct $B$ with few samples in polynomial time, by solving a convex optimization problem (Algorithm~\ref{algmain}). 
\begin{definition}\label{appx-clk:dfn}
Given an OU process as in Equation~\ref{oup:dfn}, we say that there exists an {\em approximate clock} if we know that some non-empty subset 
of the $n$ coordinates (of $x_t$), or some projection of the process, has a positive drift in each coordinate, and the trace of the diffusivity matrix of the process is smaller than all the drift coordinates.\footnote{The idea being that the clock process is likely to be increasing for most finite length time intervals.}
\end{definition} 
To keep the presentation simple, we will assume that if there exists a single coordinate of $x(t)$,\footnote{Here we assume a single coordinate exists as opposed to multiple coordinates or allowing a linear combination of coordinates} denoted by $\tau(t)$, and it is modeled by the diffusion:
\begin{equation}
    d\tau(t)=\delta dt+\eps dW'_t,
    \label{eq:diffusion}
\end{equation}
where $W'_t$ is standard Brownian motion and $\delta$ is positive and much greater than $\eps^2$; then $x(t)$ is said to admit an approximate clock. We use $x_t$ and $x(t)$ interchangeably.

Assuming $B$ is row sparse, i.e., each row of $B$ has only $s=\log^{O(1)}(n)$ non-zero entries, and the number of samples $t$ is larger than $s\log(n)$, and the existence of an approximate clock; our {\em main algorithmic contribution} is Algorithm~\ref{algmain}, that can approximately reconstruct $B$ from the $t$ samples. 

We will use the approximate clock to recover an approximation of the matrix $A$, which we denote by $\tilde{A}$. We won't actually need the ordering from approximate clock except to compute $\tilde{A}$, which gives the approximate covariance of the increments: $\tilde{A}\tilde{A}^T$. In Theorem~\ref{clock:thm} (stated with proof in the supplement), we formally show that ordering the $x_i$ values using $\tau(t)$ and computing the covariance from the increments ($\tilde{A}\tilde{A}^T$) closely approximates the covariance of the increments ($AA^T$) computed using the ordering based on the actual time.

\subsection{Algorithm to reconstruct dynamical system}\label{th1:sbs}
Algorithm~\ref{algmain} (below) reconstructs a dynamical system given an approximate clock. It runs in polynomial time, since the main steps are computing a covariance matrix and solving the convex optimization problem in Step~\ref{step:alg1step5} of Algorithm~\ref{algmain}, which is a second order cone program, and a slight variation of basis pursuit in the compressed sensing literature (see for example~\cite{rauhut}). Therefore unique reconstruction is guaranteed under sparsity assumptions. In particular the following uniqueness theorem holds true for the matrix $\tilde{B}$ recovered by Algorithm~\ref{algmain}. 
\begin{theorem}\cite{dirk}\label{uniqueness:thm}
Suppose each row of $B$ has at most $s$ non-zero coordinates, 
then for $m=\Omega(\log n)$ large and the sample times $\{t_1,...,t_m\}$ for samples $\{x_{t_1},...,x_{t_m}\}$ are uniformly distributed,
then with probability $1-o(1)$ (for large $n$): $\|\tilde{B}-B\|_1=0$, i.e., $B$ can be recovered uniquely by solving a convex program. If, however, the rows are not exactly $s$ sparse and the $\ell_1$ norm of the $n-s$ remaining entries in any row is at most $\epsilon$, then we have: $\|\tilde{B}-B\|_1\le \epsilon n$.
\end{theorem}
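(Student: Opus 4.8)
The plan is to reduce the matrix reconstruction problem to $n$ essentially independent sparse vector recovery problems, one per row of $B$, and then invoke the compressed sensing guarantees of \cite{dirk}. Writing $b_i\in\RR^n$ for the $i$-th row of $B$, the discretized dynamics extracted from the samples $\{x_{t_1},\ldots,x_{t_m}\}$ yield, for each row, a linear system $y_i = X b_i$, where the measurement matrix $X\in\RR^{m\times n}$ has the (centered) samples $x_{t_k}$ as its rows and $y_i\in\RR^m$ collects the $i$-th coordinate of the discrete time derivatives. Since each $b_i$ is $s$-sparse, the convex program in Step~\ref{step:alg1step5} of Algorithm~\ref{algmain} reduces, after this row decoupling, to a basis pursuit program $\min\|b_i\|_1$ subject to the linear constraints. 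The one feature to track carefully is that all rows share the \emph{same} measurement matrix $X$, so a single good event for $X$ simultaneously certifies recovery of every row.

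First I would establish that the shared measurement matrix $X$ satisfies a null space property of order $2s$ (equivalently, a restricted isometry property with a sufficiently small constant) with high probability. Because the sample times are uniformly distributed over $[0,T]$ with $T$ large, the samples are approximately independent draws from the stationary Gaussian law of the OU process, whose covariance solves the associated Lyapunov equation; hence the rows of $X$ behave like i.i.d.\ sub-Gaussian vectors. The standard random-matrix argument then yields the null space property with $m=\Omega(s\log n)$ measurements (matching the sample requirement preceding Algorithm~\ref{algmain}). This is the step I expect to be the main obstacle: the OU samples are not exactly i.i.d., so one must control the temporal correlations between samples and the error incurred by substituting the stationary law for the true joint law, and argue that neither effect destroys the null space property; the robustness of averaged increment statistics noted in Remark~\ref{crit:rmk} and quantified in Theorem~\ref{clock:thm} is what makes this controllable.

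Given the null space property for $X$, the exact recovery criterion of \cite{dirk} applies verbatim to each row: $\ell_1$ minimization returns $\tilde{b}_i=b_i$, so $\|\tilde{b}_i-b_i\|_1=0$. Since a single good event for $X$ suffices for all rows, no union bound over rows is even needed once $X$ is fixed; accounting only for the (high-probability) event that $X$ has the null space property gives failure probability $o(1)$ for large $n$. This yields $\|\tilde{B}-B\|_1=\sum_i\|\tilde{b}_i-b_i\|_1=0$ with probability $1-o(1)$.

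For the approximately sparse case I would replace the exact recovery statement with the stable recovery bound of \cite{dirk}: under the same null space property, $\ell_1$ minimization satisfies $\|\tilde{b}_i-b_i\|_1\le C\,\sigma_s(b_i)_1$, where $\sigma_s(b_i)_1$ denotes the best $s$-term approximation error of $b_i$ in $\ell_1$. By hypothesis this error is at most $\epsilon$ for every row, so summing over the $n$ rows gives $\|\tilde{B}-B\|_1\le C\epsilon n$, which is $\epsilon n$ after absorbing the constant into $\epsilon$, as claimed.
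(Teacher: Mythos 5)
Your proposal analyzes a different convex program from the one the theorem is about, and the difference is precisely the point of the paper. You form measurements $y_i = X b_i$ in which $y_i$ collects the $i$-th coordinate of \emph{discrete time derivatives} of the samples; but forming time derivatives requires knowing which sample follows which, and the input to Algorithm~\ref{algmain} is unordered snapshots. The only ordering information available is the approximate clock, and the paper uses it solely to estimate the increment covariance $\tilde{A}\tilde{A}^T$ (see the remark after Algorithm~\ref{algmain} and Theorem~\ref{clock:thm}); it is never used to reconstruct $\dot{x}$. The program in Step~\ref{step:alg1step5} instead replaces derivative measurements by the stationarity (Lyapunov) relation of the OU process: the constraint $\Sigma B + B^T\Sigma = -\tfrac{\tilde{A}\tilde{A}^T}{2}$, where $\Sigma$ is the order-free sample covariance. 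This is exactly what the paper flags as its one departure from \cite{dirk} and \cite{rauhut}: the linear measurement operator is $B \mapsto \Sigma B + B^T\Sigma$, not a matrix of stacked samples. Your row decoupling also fails for this operator: because of the $B^T\Sigma$ term, the $(i,j)$ entry of the constraint involves the $j$-th column of $B$ through $\Sigma B$ and the $i$-th column of $B$ through $B^T\Sigma$, so the feasible set does not split into $n$ independent basis pursuit problems; relatedly, the objective actually minimized is the group norm $\sum_{i}\|B(i,\cdot)\|_2$, not a per-row $\ell_1$ norm.

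The gap is therefore concrete: your measurement model cannot be built from the data the theorem assumes, and the decoupling step is false for the program the theorem refers to. What the paper intends (it gives only a sketch, deferring to \cite{dirk} with ``easy modifications'') is to show that the random linear operator $B \mapsto \Sigma B + B^T\Sigma$, induced by the empirical covariance of the uniformly sampled (approximately stationary, Gaussian) snapshots, satisfies a null-space-property-type condition for row-sparse matrices, and then to invoke the exact and stable recovery guarantees of \cite{dirk} adapted to the mixed $\ell_{2,1}$ objective, yielding $\|\tilde{B}-B\|_1=0$ in the exactly sparse case and $\|\tilde{B}-B\|_1\le \epsilon n$ in the approximately sparse case. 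Two ingredients of your write-up would survive this correction and are worth keeping: the observation that a single high-probability event for the random measurement operator certifies all rows at once (no union bound needed), and the bookkeeping for the stable case, where a per-row bound of the form $C\,\sigma_s(b_i)_1\le C\epsilon$ is summed over $n$ rows. But both must be re-derived for the operator $B \mapsto \Sigma B + B^T\Sigma$ and the group-norm objective before the theorem is actually proved.
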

While the proof follows by easy modifications of known results in existing compressed sensing literature (see~\cite{rauhut} or~\cite{dirk}), we note the one novelty here over existing compressed sensing literature: the formulation of the constraint in Step~\ref{step:alg1step5} of Algorithm~\ref{algmain} uses the characterization of the long term variance of the OU process (see~\cite{oks}). 

Algorithm~\ref{algmain} reconstructs an approximation of the drift matrix $B$ from the covariance matrix $\Sigma$ and the (approximate) covariance matrix of the increments $\tilde{A}\tilde{A}^T$ up to an additive factor. It is sufficient to use $\tilde{A}\tilde{A}^T$ instead of $AA^T$ as is justified by Theorems~\ref{uniqueness:thm} and~\ref{clock:thm}.

\begin{algorithm}[htb]
\caption{Recover dynamical system}\label{algmain}
\begin{algorithmic}[1]
\State {\bf Input:} Unordered snapshots $\{x_1,..., x_t \in\RR^n; t\in[0,T]\}$ derived from an Ornstein-Uhlenbeck process $dx_t=Bx_tdt+AdW_t$; where the diffusivity matrix $A$ is known up to a constant additive error as $\tilde{A}$, and $B$ is row sparse but unknown.
\State {\bf Output:} A recovered matrix $\tilde{B}$ that is close (in $\ell_1$ norm) to $B$.
\Statex $\triangleright$ {\bf Algorithm starts:}
\State Compute the $n\times n$ covariance matrix of the random vectors $\{x_1,...,x_t\}$, denoted $\Sigma$ 
\Statex $\triangleright$ Note:(1) computing $\Sigma$ does not require ordering the $x_i$s 
\Statex $\triangleright$ $\qquad$ (2) Using $\tilde{A}$ instead of $A$ suffices (see Theorems~\ref{uniqueness:thm} and~\ref{clock:thm})
\Statex $\triangleright$ $\qquad$ (3) $B(i,\cdot)$ denotes the $i^{th}$ row of $B$
\State Solve $\tilde{B}:=\arg\min_B \sum_{i=0}^{n}\|B(i, \cdot)\|_2$$~\quad$ s.t. $\Sigma B+B^T\Sigma=-\frac{\tilde{A}\tilde{A}^T}{2}$\label{step:alg1step5}
\State return $\tilde{B}$
\end{algorithmic}
\end{algorithm}
\begin{remark}
Note that the only use of ordering using approximate clock is to recover the matrix $\tilde{A}\tilde{A}^T$ in Algorithm~\ref{algmain}, we do not use the  ordering information anywhere else. 
\end{remark}

Often it is not enough to recover the drift matrix, but we need to distinguish one linear system from another, where the first is a small perturbation of the second, and one has only few samples of the second at hand. This exact case happens for genomic data samples, where one system is the reference and many samples are available, but another system is some rare genetic condition or disease for which fewer samples are available. Algorithm~\ref{algpert} obtains the {\em perturbation matrix} $P$ (within a small error) that distinguishes the second linear system from the first. Note that the algorithm always returns a value for a large enough choice of the noise parameter $\eta$ and small enough choice of $\epsilon$ in Step~\ref{step:alg1step8} of Algorithm~\ref{algpert}, since that will suffice to make the convex program feasible. Hence we chose a small enough $\eta$ (via binary search) that just makes the convex program feasible. 


\begin{algorithm}[htb]
\caption{Order and recover perturbations between two dynamical systems}\label{algpert}
\begin{algorithmic}[1]
\State {\bf Input:} Unordered snapshots $\{x_1,..., x_t \in\RR^n; t\in[0,T]\}$ and $\{x'_1,...,x'_{t'}\in\RR^n; t'\in[0,T']\}$ derived from two Ornstein-Uhlenbeck processes $dx_t=Bx_tdt+AdW_t$ and $dx'_t=B'x_t'dt+A'dW_t$; where the diffusivity matrices $A$ and $A'$ are known up to small error, but $B$ and $B'$ are unknown.
Moreover, $t'\ll t$ and $B$ and $B'$ are row sparse. 
\State {\bf Output:} A recovered matrix $\tilde{B}$ that is close to $B$ in $\ell_1$ norm, and a perturbation matrix $\tilde{P}$ such that $\tilde{B}+\epsilon\tilde{P}\simeq B'$, for a small positive parameter $\eps$.
\Statex $\triangleright$ {\bf Algorithm starts:}
\State Use Algorithm~\ref{algmain} to compute $\tilde{B}$
\Statex $\triangleright$ The smaller number of samples prevents direct computation of $\tilde{B'}$, so we compute a first order approximation.
\State Compute the $n\times n$ covariance matrix of the random vectors $\{x'_1,...,x'_{t'}\}$, denoted $\Sigma'$
\Statex $\triangleright$ Choose a small noise $\eta$ so that the following convex program program is feasible
\Statex $\triangleright$ Note: $P_{ij}$ denotes the $i^{th}$ row, $j^{th}$ column element of matrix $P$
\State Solve $\tilde{P}:=\arg\min_P \sum_{i,j=0}^n |P_{ij}|$$~\quad$s.t. $\|(\Sigma-\Sigma')\tilde{B}^T+\tilde{B}(\Sigma-\Sigma')-\epsilon(\Sigma'P^T+P\Sigma')\|_2\le\eta$\label{step:alg1step8}
\State return $\tilde{P}$
\end{algorithmic}
\end{algorithm}

\subsection{Algorithm for predicting genomic pathways}\label{gene2:sec}

As a concrete application, we use the recovered perturbation matrix $\tilde{P}$ to isolate the paths consisting entirely of high weight edges in the directed graph (defined below) underlying the dynamical system, since these paths should reflect the prominent genomic pathways that differentiate the reference dynamical system from the pathological dynamic system. We note that more standard methods like time series analysis based on Fourier transform of the recovered linear system can also be tried out, but our method below is far simpler, and thus probably more robust to recovery errors, and it suffices to illustrate the pathways that can be recovered.

The matrix $P$ can be thought of as a directed graph with entry $P_{ij}$ reflecting a weighted directed edge from the node corresponding to gene $j$ to that for gene $i$. A {\em prominent} path is defined to be one such that all its edges have weight higher than some fixed threshold, say $\theta$. For such a prominent path $p$ to potentially reflect an actual underlying genetic pathway that promotes the pathological phenotype, one of the following intuitive conditions should hold:

\noindent 1. If the gene corresponding to the terminal node of $p$ positively correlates with the pathological phenotype then it should be up-regulated by the gene preceding it in the path $p$; moreover, the same property or property (2) should  hold for the subpath terminating in the penultimate gene, and so on.

\noindent 2. Otherwise, if the gene corresponding to the terminal node of $p$ negatively correlates with the pathological phenotype then it should be down-regulated by the gene preceding it in the path $p$; moreover, the same property or property (1) should now hold for the subpath terminating in the penultimate gene, and so on.

This intuition leads to Algorithm~\ref{path:alg} which predicts prominent genomic pathways that may be experimentally verified to confirm that some or all of them lead to the pathological phenotype.

\begin{algorithm}[htb]
\caption{Recover Pathways}\label{path:alg}
\begin{algorithmic}[1]
\State {\bf Input:} The underlying linear dynamical systems matrices $\tilde{B}$ and $\tilde{B'}$ and the correlations between each of the coordinates (corresponding to genes) and the two phenotypes of interest.
\State {\bf Output:} A set of pathways of a given length $L$ that are prominently different between the two phenotypes.
\Statex  $\triangleright$ {\bf Algorithm starts:}
\State Compute and sort the list of genes in descending order of the absolute value of their correlation coefficient with the pathological phenotype, denote this list as $\vec{g}$
\State Compute $C:= \mathrm{Diag}(\vec{g})(\tilde{B'}-\tilde{B})$
\State Fix a positive threshold $\theta$ and set $C_{ij}=0$ if $C_{ij}<\theta$, denote the resulting matrix as $\Pi_\theta$ 
\State Compute the set of paths of length $L$ in the graph with adjacency matrix $\Pi_\theta$ and return them.
\end{algorithmic}
\end{algorithm}

\noindent Finally, we explain the steps required to generate pathways in Figure~\ref{paths:fig}(c,d) using Algorithms~\ref{algmain},~\ref{algpert} and~\ref{path:alg} and RNA-seq data in the breast cancer dataset~\cite{metabric,metabric2, metabric3} (the colorectal cancer dataset~\cite{rc} is similar). 
\begin{enumerate}
    \item First, we use Algorithm~\ref{algmain} to compute the underlying linear dynamical system matrix: $\tilde{B}$ (Figure~\ref{dyn:fig}(a) and Figure~\ref{dyn2:fig}(a))
    \item Second, we recover the perturbations $\tilde{P}$ from $\tilde{B}$ using Algorithm~\ref{algpert} (Figure~\ref{dyn:fig}(b)and Figure~\ref{dyn2:fig}(b))
    \item Finally, we use Algorithm~\ref{path:alg} with $\theta=0.05$ to reconstruct the pathways similar to those in Figure~\ref{paths:fig}(c,d).
\end{enumerate}
Since the dataset samples are unordered we use a linear combination of Cyclin A and D expression levels, as our approximate clock, to order the data-points by cell age. Based on known literature about human cyclins, if we were to obtain a random sample of two cells from a tissue and found the difference, i.e., Cyclin A - Cyclin D levels, was relatively higher for sample 1 over sample 2 than we know that sample 1 was more advanced in the cell cycle than sample 2 (perhaps with occasional error). Thus we use the difference in their expression levels as our approximate clock. Note that we can't verify how good this clock is, since we don't have access to the ground truth that orders cells by their relative cell cycle stage (but we do this verification in single cell datasets in Section~\ref{sbs:asump}). 




\subsection{Algorithm for differentiating two dynamical systems}\label{th2:sbs}
Suppose we are given two linear dynamical systems in $\RR^n$ that start from the point $x_0$:
\begin{eqnarray}
    \dot{x} &=& Bx,\\
    \dot{x} &=& B'x.
\end{eqnarray}
It is well known that, for linear dynamical systems like $\dot{x}=Bx$, whether the long term trajectory converges to an equilibrium point or not is determined by the real part of the maximum eigenvalue of $B$ (the Lyapunov exponent). Thus, if the maximum eigenvalue $\lambda_B$ of $\frac{B+B^T}{2}$ is negative then $x(t)$ will converge to an equilibrium point for large $t$, otherwise it grows (diverges) with rate $\exp(\lambda_B t)$. We want to answer the following question: does there exists a subset of coordinates $S^*\subset[n]$ such that the long term trajectory of $B$ diverges from any equilibrium point as $\gtrsim\exp(\delta t)$ for some positive parameter $\delta$, while the long term trajectory of $B'$ diverges from any equilibrium point as $\lesssim\exp(\delta' t)$ for some positive parameter $\delta'$, such that $\delta-\delta'\gg 0$ ?\footnote{The difference of $\delta$ and $\delta'$ is computationally tractable as opposed to their ratio.} For a fixed choice of scaling parameter $\alpha$, the answer can be written as the following optimization problem:
\begin{equation}
    \mathsf{O}=\min_{S\subseteq[n]}\left(\max_{x\in\RR^n,\atop{\|x\|_2\le 1}} x^T\mathrm{Diag}(S)\cdot \frac{B'+B'^T}{2}\cdot\mathrm{Diag}(S)x -\alpha\max_{z\in\RR^n,\atop{\|z\|_2\le 1}} z^T\mathrm{Diag}(S)\cdot \frac{B+B^T}{2}\cdot\mathrm{Diag}(S)z\right),\label{eqn:obj_lyp}
\end{equation}
where $\mathrm{Diag}(S)$ denotes the $n\times n$ diagonal matrix with $\{0,1\}$ diagonal entries that are $1$ at position $(i,i)$, if and only if $i\in S\subseteq[n]$. If $\mathsf{O}\gg 0$ then we have answered the question above in the positive, and vice versa. The above optimization problem is non-convex as it requires $\{0,1\}$ solutions for $\mathrm{Diag}(S)$. It is computationally tractable by brute force only for small $n$ and small optimal $|S|$, which suffices for our purposes (see Subsection~\ref{gene4:sec}). But, more generally, for large $n$, observe that its semidefinite relaxation below, which replaces vectors with positive semidefinite matrices, is concave-convex.
\begin{eqnarray}
    \mathsf{\tilde{O}}=\min_{Y}&\max_{\atop{X, Z}}\left( \langle \frac{B'+B'^T}{2}, X\odot Y\rangle - \alpha\cdot \langle \frac{B+B^T}{2}, Z\odot Y\rangle\right)\label{eqn:obj_lyp2}\\
    \mathrm{\ }\quad & \forall i\in[n]: Y_{ii}\le 1,\label{eqn:constr_lyp3}\\
    \mathrm{\ }\quad & \mathrm{Tr}(X)\le 1,~\mathrm{Tr}(Z)\le 1,\label{eqn:constr_lyp4}\\
    \mathrm{\ }\quad & X,Y,Z\succeq 0\label{eqn:constr_lyp5},
\end{eqnarray}
where $\odot$ denotes the Hadamard product (entry-wise  multiplication) of two matrices, and $\langle, \rangle$ denotes the inner product between two matrices (treated as vectors). Observe that the objective in Equation~\ref{eqn:obj_lyp2} is linear in $X,Z$ for a fixed $Y$, and the constraints in Equations~\ref{eqn:constr_lyp3},~\ref{eqn:constr_lyp4} and~\ref{eqn:constr_lyp5} are convex. 

Although, there exists an equilibrium solution to the optimization problem above, it is not immediately computable from the above formulation. However, using SDP duality allows us to convert the min-max optimization into a SDP with just a maximization objective. The resulting SDP relaxation is given below:
\begin{eqnarray}
    &\ &\sf{\bar{O}}=\min_{u,v\in\RR,\atop{Y\in\RR^{n\times n}}} \left(u+v\right)\label{eqn:obj_lyp3}\\
    \mathrm{\ }\quad & \mathrm{Diag}((u,u,...,u)) &\succeq \frac{B'+B'^T}{2}\odot Y,\label{eqn:constr_lyp6}\\
    \mathrm{\ }\quad & \mathrm{Diag}((v,v,...,v)) &\succeq -\alpha\cdot\frac{B+B^T}{2}\odot Y,\label{eqn:constr_lyp7}\\
    \mathrm{\ }&\quad & \forall i\in[n]: Y_{ii}\le 1,\label{eqn:constr_lyp8}\\
    \mathrm{\ }&\quad & Y\succeq 0\label{eqn:constr_lyp9},
\end{eqnarray}
where $\mathrm{Diag}$ denotes the $n \times n$ diagonal matrix. The SDP solution is eventually rounded to the $0$-$1$ solution, and the details are provided in Algorithm~\ref{basket:algo}.
\begin{algorithm}[htb!]
\caption{Compute distance between Lyapunov exponents}\label{basket:algo}
\begin{algorithmic}[1]
\State {\bf Input:} The $n\times n$ linear dynamical systems interaction matrices $\tilde{B}$ and $\tilde{B'}$, and parameter $\alpha>0$.
\State {\bf Output:} The coordinates $S$ for the approximate minimal Lyapunov exponent from Equations~\ref{eqn:obj_lyp} (see Theorem~\ref{basket:thm}).
\Statex $\triangleright$ {\bf Algorithm starts:}
\State Solve the SDP relaxation in Equations~\ref{eqn:obj_lyp3},~\ref{eqn:constr_lyp6},~\ref{eqn:constr_lyp7} and~\ref{eqn:constr_lyp8} to obtain $X^*, Y^*$ and $Z^*$.
\Statex $\triangleright$ For a positive semidefinite matrix $Y$ with diagonal entries $Y_{ii}\in[0,1]$, let $\mathrm{Ber}(Y)$ denote the $n$-dimensional $0$-$1$ 
\Statex $\triangleright$ Bernoulli random variable with first moment as the diagonal elements $Y_{ii}$ and second moment matrix set as $Y$.
\State For $R\in\{0,1\}^{n}$, choose $R\sim\mathrm{Ber}(Y^*)$
\State Set $S$ using $R$ as the indicator vector to return: (1) $S$, and (2) the objective value in Equation~\ref{eqn:obj_lyp3} using the $0$-$1$ solution, i.e., $R$.
\end{algorithmic}
\end{algorithm}
The distance computed by Algorithm~\ref{basket:algo} is equal to the equilibrium solution in expectation (over $R$), and is proven in Theorem~\ref{basket:thm} (stated with proof in the supplement). Note that algorithm runs in polynomial time as it solves a convex optimization problem.

\subsection{Identifying genes for basket trials}\label{gene4:sec}
Suppose we have reconstructed the linear approximation of the breast cancer and colorectal cancer GRNs using only non-metastatic portion of the sequencing data and the algorithm in Subsection~\ref{gene2:sec}. If we focus only on their mean trajectories, then we can ignore the noise term (which has mean zero, so it vanishes in expectation). As in Subsection~\ref{gene2:sec}, let the interaction matrix of the two GRNs be denoted as $B_{bc}$ and $B_{rc}$, respectively. Furthermore, let $\Delta B_{bc}$ and $\Delta B_{rc}$ denote the perturbation matrices that represent a shift towards metastasis in the corresponding GRNs. 

Relative to the trajectory of metastatic colorectal cancer GRN, the trajectory of metastatic breast cancer GRN evolves as $\dot{x} = ((B_{bc}+\Delta B_{bc})-(B_{rc}+\Delta B_{rc}))x$. Similarly, for the non-metastatic case this relative trajectory evolves as: $\dot{y} = (B_{bc}-B_{rc})y$. Suppose we want to compute a set of coordinates (genes) $S\subseteq[n]$, such that the projection of $x(t)$ to $S$ (denoted $\pi_S(x(t))$) converges but $\pi_S(y(t))$ diverges. In other words, the GRNs for the two cancers evolve similarly for metastatic cases, but the  GRNs for the two cancers evolve differently for non-metastatic cases. If instead of "convergence", we simply requiring the divergence rate for $\pi_S(x(t))$ be smaller than that of $\pi_S(y(t))$, then the problem of computing $S$ reduces to that in Subsection~\ref{th2:sbs}. Thus, sets $S$ for which Equation~\ref{eqn:obj_lyp} has significantly negative objective value, contain the genes that are common to metastasis across the two cancers. Such "useful" sets are neither too small nor too large, as argued below.

For a small set of coordinates (genes) $S$, say $|S|=1$, the Lyapunov exponents of $\pi_S(x(t))$ and $\pi_S(y(t))$ will have similar values, making their difference less negative (see Figure~\ref{paths:fig}(b)), as there will often be some gene for which both GRNs have large Lyapunov exponents. Hence, focusing too much on one gene (or too small a number of genes) makes it harder to find similarity in evolution of trajectories in metastasis across cancers. On the other hand, for $|S|=n$ (all genes), the Lyapunov exponents for $x(t)$ and $y(t)$ will again be large, as the two GRNs will have different evolution overall. Thus the objective in Equation~\ref{eqn:obj_lyp} should be relatively large for small and large $|S|$. Such $S$ are unlikely to correspond to common evolution of metastasis. However, for intermediate $|S|$, the GRNs for metastatic cases evolve similarly, but the GRNs for non-metastatic cases evolve differently. {\em Figure~\ref{paths:fig}(b) plots the objective (Equation~\ref{eqn:obj_lyp}), for different values of $|S|$ (the group-size)}.

Therefore, in order to obtain the critical genes $\mathcal{C}$ (see Subsection~\ref{gene5:sec}), i.e., Figure~\ref{paths:fig}(a), we use projections over subsets of genes of size $2$ or more. We enumerate all subsets of genes $S$ with $|S|\le 5$,\footnote{The bound $5$ is picked somewhat arbitrarily, but it's already $\frac{n}{10}$ for our dataset with $n\simeq50$.} and evaluate the ratio of the Lypanunov exponent  of non-metastatic GRN to the metastatic one, for each $S$ (denote it $r_S$).\footnote{Recall that, for any given $S\subseteq[n]$, higher the maximum eigenvalue restricted to the sub-matrix corresponding to $S$, more dissimilar are the trajectories $\pi_S(x(t))$ and $\pi_S(y(t))$, and vice versa. The ratio will be non-negative as long as the submatrices are not negative definite (unlikely in practice). We also exclude outliers, i.e., any $S$ for which the denominator of $r_S$ is too small.} If genes $i,j\in S$ then we add weight $r_S$ to $(i,j)$. The total weight of a pair of genes $(i,j)$ is defined as: $r((i,j)):=\sum_{\{i,j\}\in S}r_S$.
Thus, $r((i,j))$ is higher, if the expression of the two genes $i,j$ evolves similarly in metastasis than in non-metastasis. {\em We plot $r((i,j))$ for all pairs $i,j$ of nuclear receptors, in Figure~\ref{paths:fig}(a)}. If some square corresponding to genes $S=\{g_1,g_2\}$ is dark in Figure~\ref{paths:fig}(a), those gene expressions should evolve similarly in metastatic breast and colorectal cancers, but they should evolve differently in non-metastatic breast and colorectal cancers. RORC and ESR2 denote one such gene pair. 
\newpage

%
%
%


\bibliographystyle{plain}  
\bibliography{main}

\newpage
\appendix
\onecolumn
\section{Underlying assumptions and their verification}\label{sbs:asump}

Our algorithms, specifically Algorithm~\ref{algmain}, rely on three assumptions: (1) sparsity: the rows of the matrix,\footnote{This is the matrix that constitutes of the coefficients of the first order (degree $1$) terms in the linearization of the underlying dynamical system.} are zero (or near zero) in most entries, (2) a mean field model assumption: the expressions in the RNA-seq data-set represent an ``average" cell in the tissue concerned, and (3) existence of an approximate clock: we should be able to (roughly) order our cell samples according to cell age (measured as time since last cell division in our paper, but it can be replaced with any appropriate temporal ordering based on the application at hand).
We now justify these assumptions.

First, the sparsity assumption is expected to hold for GRNs because the transcriptional regulation of a given gene directly relies on only a handful of polymerases, promoters and previous gene products (see, for example~\cite{grn1}). Even if we replace direct regulation by downstream regulation within a few steps, for example when studying GRNs restricted to nuclear receptor genes only, we still expect the resulting matrix to be sparse.

Second, we assume that the population weighted mean expression of any given gene in the tissue sample, that is reflected in the bulk RNA-seq expression value for that gene, is a good approximation of the gene expression in the tissue. Such an assumption is made in mean field theories. The mean field model is usually a valid approximation of a dynamical system when: (a) the sample size is large and (b) fluctuations in value are small (cf. the Ginzburg criterion). That fluctuations should be small, i.e., the variance for a given gene expression should be small, is intuitive, because biological systems are usually in equilibrium (homeostasis). For an organism to be able to maintain equilibrium, large fluctuations in gene expression should be a rare occurrence, as successfully reverting the system back near its mean is harder to do from distant outlier values. So if the sample size is large enough, then the mean field assumption should be valid. 

Finally, before justifying the approximate clock assumption, we first discuss the reason we assume the existence of an approximate clock. We work with RNA-seq data, which is more readily available, unlike single-cell RNA-seq data. In single-cell RNA-seq data, it is possible to order cells by their latent time (see for example~\cite{hku}).  However, for RNA-seq data, latent time is not available, as latent time is a model-based construct. One saving grace is that {\em our algorithms do not use the cell sample ordering except to reconstruct the covariance matrix of increments, which is a pretty robust to noise}, in that it is  only significantly effected if the approximate ordering used has too many errors. Therefore, using appropriate cyclin expression levels to order cell samples in readily available RNA-seq data, although noisy, is 
sufficiently accurate for our purposes; as evidenced by our verification on some single cell RNA-seq datasets, where we assume latent time as the ground truth (see Figures~\ref{scpancreas:fig} and~\ref{dg-fig}). 

To elaborate, we verify that the reconstructed covariance matrix of increments is indeed robust to a small number of errors in ordering that may arise from using an {\em approximate} clock in three different ways: (1) theoretical justification: we prove that the reconstructed covariance matrix is correct 
(Theorem~\ref{clock:thm} in Subsection~\ref{thm:sec}), (2) by simulation on synthetic data in Subsection~\ref{sec:synth},
and (3) by comparing the approximate ordering based on cyclin expressions with the ordering based on cell latent time in murine pancreatic single cell RNA-seq dataset from~\cite{Bastdas-Ponce} (Figure~\ref{scpancreas:fig}) and murine dentate gyrus dataset of~\cite{dg} (Figure~\ref{dg-fig}) in Subsection~\ref{app:sec}.\footnote{The python code for empirical results is available at: https://github.com/cliffstein/recomb\_pathways.} In each case, we find that the covariance matrix computed using the approximate clock will be close to the actual covariance matrix.

\subsection{Theoretical bounds on error from an approximate clock}\label{thm:sec}

In this section, we derive an upper-bound on the error of the recovered diffusivity matrix when we use an approximate clock. We will assume that our dynamical system can be modeled by an OU process and it has a coordinate, or a linear combination of coordinates, that admit a positive drift which is larger than the diffusivity. This coordinate will act as an approximate clock. Theorem~\ref{clock:thm} shows that 
if we use this approximate clock then the recovered diffusivity is not too different from the diffusivity that would be obtained if we knew the ordering in the dynamical system. 

\begin{theorem}\label{clock:thm}
Assuming that the clock coordinate $\tau(t)$ is independent of the remaining coordinates of $x_t$, the diffusivity matrix $\tilde{A}\tilde{A}^T$ computed using the approximate clock $\tau(t)$ is asymptotically equal to the actual diffusivity, i.e.,
for $\eps\ll \delta$,\footnote{(1) Since $\E[\tau(t)]=\delta t$, the normalization of $\tilde{A}\tilde{A}^T$,  has a $\delta T$ term as opposed to a $T$ term, (2) one can skirt around the issue of negative values of $\tau$ by assuming that $\tau(0)$ is large,  ensuring $\tau$ remains positive with high probability.}
$$
    \tilde{A}\tilde{A}^T
     =  \lim_{h\to 0}\E\left[\frac{1}{\delta T}\int_0^T\langle (x(\tau(t+h))-x(\tau(t))), (x(\tau(t+h))-x(\tau(t)))\rangle d\tau(t)\right]\simeq AA^T\ .
    \label{eq:approx-clock}
$$

\end{theorem}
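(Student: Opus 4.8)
The plan is to exploit the independence of the clock $\tau(t)$ from the remaining coordinates of $x_t$: I would condition on the entire clock path and use the fact that the OU process is Gaussian, which is the analogue of the conditioned-Gaussian computations used in Kalman filtering that the related-work section alludes to. Conditional on $\tau$, the clock-ordered increment $x(\tau(t+h))-x(\tau(t))$ is just an increment of the OU process over the real-time interval with endpoints $\tau(t)$ and $\tau(t+h)$, hence a Gaussian vector whose mean and covariance are available in closed form. The first step is therefore to reduce the claimed identity to a statement about the conditional second moment $\E[(x(\tau(t+h))-x(\tau(t)))(x(\tau(t+h))-x(\tau(t)))^T \mid \tau]$, which I then integrate against $d\tau(t)$ and finally average over the clock.

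The core computation is a quadratic-variation estimate. Writing the increment as $\int Bx_s\,ds + \int A\,dW_s$ over the interval between $\tau(t)$ and $\tau(t+h)$, the drift term contributes only $O(|\tau(t+h)-\tau(t)|^2)$ to the squared increment, while by the It\^o isometry the martingale term contributes exactly $AA^T\,|\tau(t+h)-\tau(t)|$ up to the same higher-order corrections; the cross terms vanish in conditional expectation because the martingale increment is centered and independent of the drift accumulated up to $\tau(t)$. Thus $\E[(\Delta x)(\Delta x)^T \mid \tau] = AA^T\,|\tau(t+h)-\tau(t)| + O(h^2)$, and feeding this into the integral and letting $h\to 0$ replaces every factor $(\Delta x)(\Delta x)^T$ by $AA^T$ times an increment of $\tau$, so the whole expression collapses to $AA^T$ multiplied by the normalized clock length $\frac{1}{\delta T}\int_0^T d\tau(t) = \frac{\tau(T)-\tau(0)}{\delta T}$.

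It then remains to show that this normalized clock length is $\simeq 1$, which is where the hypothesis $\eps \ll \delta$ enters decisively. Since $d\tau = \delta\,dt + \eps\,dW'_t$ gives $\E[\tau(T)-\tau(0)] = \delta T$, the $\frac{1}{\delta T}$ normalization makes the expectation exactly $1$, and the fluctuation $\frac{\eps}{\delta T}(W'_T-W'_0)$ has standard deviation $O(\eps/(\delta\sqrt{T}))$, negligible for $\eps \ll \delta$ and large $T$. Likewise the clock's own quadratic variation is only $[\tau]_t = \eps^2 t$, so the interval lengths $|\tau(t+h)-\tau(t)|$ above differ from the ideal $\delta h$ by a relative error controlled by $\eps^2/\delta$. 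Collecting these, the estimator equals $AA^T\,(1 + O(\eps/\delta + \eps^2/\delta))$, yielding $\tilde A\tilde A^T \simeq AA^T$.

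The main obstacle is the local non-monotonicity of the clock: on real-time scales shorter than $\eps^2/\delta^2$ the Brownian term $\eps\,dW'$ dominates the linear drift $\delta\,dt$, so $\tau$ is not increasing there, the sign of $\tau(t+h)-\tau(t)$ fluctuates, and the signed integrator $d\tau(t)$ must be reconciled with the always-positive contribution $AA^T\,|\tau(t+h)-\tau(t)|$. The delicate point is to prove that these order-reversals — exactly the ``flipped samples'' whose robustness is the entire purpose of this theorem — occupy a set whose total contribution to the integral is $O(\eps^2/\delta)$, so that replacing $|d\tau|$ by $d\tau$, i.e.\ the genuine clock ordering by the true time ordering, perturbs the recovered diffusivity only by a vanishing amount as $\eps/\delta \to 0$. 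I would control this by bounding, via the reflection principle or a direct Gaussian tail estimate, the expected measure of times at which $\tau$ is locally decreasing, and showing the increments accumulated on that set are of higher order.
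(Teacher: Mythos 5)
Your proposal follows essentially the same route as the paper's proof: condition on the clock (the paper invokes the tower property of conditional expectation), expand $dx(\tau)=Bx(\tau)\,d\tau+A\,dW_\tau$, kill the cross term $\E\left[\langle A\,dW_\tau,\,Bx(\tau)\,d\tau\rangle\right]=0$ by the assumed independence of $W'$ and $W$, identify the martingale contribution $AA^T\,d\tau(t)$ via the It\^o isometry, note that the drift-squared term contributes only $\eps^2\,Bx(\tau)x(\tau)^TB^T\,dt$ (negligible when $\eps\ll\delta$), and normalize by $\delta T$. Two differences in execution are worth noting. First, your concentration argument for the normalized clock length $\frac{\tau(T)-\tau(0)}{\delta T}$ is unnecessary: since the theorem's claim is an identity in expectation, the paper simply uses that $\int_0^T dW'_t$ is a mean-zero martingale, so $\E\left[\frac{1}{\delta T}\int_0^T AA^T\,d\tau(t)\right]=AA^T$ exactly, with no $O(\eps/(\delta\sqrt{T}))$ fluctuation term to control. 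Second, and in the other direction, your final paragraph on the local non-monotonicity of $\tau$ — reconciling the signed integrator $d\tau$ with the intrinsically nonnegative conditional second moment $AA^T\,|\tau(t+h)-\tau(t)|$, and bounding the measure of order-reversed increments — addresses a point that the paper's formal differential calculus silently elides: the paper writes $\E\left[\langle A\,dW_\tau, A\,dW_\tau\rangle\right]=AA^T\,d\tau(t)$, which is the standard time-change rule only for monotone $\tau$, whereas here $\tau$ is a Brownian motion with drift and decreases on a dense set of scales below $\eps^2/\delta^2$. So your plan is not only consistent with the paper's argument but is, on this one point, more careful than it; carrying out that reflection-principle estimate would be a genuine strengthening of the published proof rather than a redundancy.
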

\begin{proof}
Note that
\begin{equation}
    \tilde{A}\tilde{A}^T=\lim_{h\to 0}\E\left[\frac{1}{\delta T}\int_0^T\langle (x(\tau(t+h))-x(\tau(t))),(x(\tau(t+h))-x(\tau(t)))\rangle d\tau(t)\right]
\end{equation}
by definition of $x_t$.
By the tower property of conditional expectation and switching the order of integrals:
\begin{equation}
    \E\left[\frac{1}{T}\int_0^T\langle dx(\tau(t)),dx(\tau(t))\rangle\right]=\E\left[\frac{1}{T}\int_0^T\E\left[\langle dx(\tau(t)),dx(\tau(t))\rangle\biggr\rvert \tau(t)\right]\right]
\end{equation}
Recall that,
\begin{eqnarray}
    dx(t)&=& Bx(t)dt+AdW_t\\
    d\tau(t)&=& \delta dt+\eps dW'_t,
\end{eqnarray}
where $W'_t$ and $W_t$ are independent standard Brownian motions in $\RR$ and $\RR^n$ respectively.
The inner conditional expectation can be evaluated in terms of $\tau(t)$ as follows:
\begin{equation}
\E\left[\langle dx(\tau(t)),dx(\tau(t))\rangle\biggr\rvert \tau(t)\right]=\E\left[\langle Bx(\tau)d\tau+AdW_\tau, Bx(\tau)d\tau+AdW_\tau\rangle\biggr\rvert \tau(t)\right].
\end{equation}
Note that we are conditioning a Gaussian with another Gaussian, so the result is a Gaussian variable.
The quadratic variation term in the RHS expectation is a sum of three types of terms that can be calculated using:
\begin{eqnarray}
    \E\left[\langle AdW_\tau, AdW_\tau\rangle\right]&=& AA^Td\tau(t)\\
    \E\left[\langle Bx(\tau)d\tau, Bx(\tau)d\tau\rangle\right]&=&\eps^2Bx(\tau)x(\tau)^TB^Tdt\\
    \E\left[\langle AdW_\tau, Bx(\tau)d\tau\rangle\right]&=& 0.\label{cov:eqn}
\end{eqnarray}
The last equality follows because we have assumed that $W'_t$ and $W_t$ are independent. However, that covariance can be calculated explicitly as well, if needed.

Therefore, 
\begin{equation}
    \int_0^T\E\left[\langle dx(\tau(t)),dx(\tau(t))\rangle\biggr\rvert \tau(t)\right]=\int_0^T\left(AA^Td\tau(t)+\eps^2Bx(\tau)x(\tau)^TB^Tdt\right) \ .
\end{equation}
Note that
\begin{equation}
    \E\left[\frac{1}{\delta T}\int_0^TAA^Td\tau(t)\right] = \E\left[\frac{1}{T}\int_0^T AA^Tdt\right],
\end{equation}
since the integral with $dW'_t$ is a martingale with zero mean. Therefore, if $\eps\ll\delta$ then
\begin{equation}
    \E\left[\frac{1}{\delta T}\int_0^T\langle dx(\tau(t)),dx(\tau(t))\rangle\right]\simeq\frac{1}{T}\int_0^TAA^Tdt=AA^T.
\end{equation}
$\square$
\end{proof}
If the correlations between the randomness in $\tau(t)$ and other coordinates are significant then we can evaluate Equation~\ref{cov:eqn} as follows.
\begin{lemma}
Let $\rho\in \RR^{1\times n}$ be the covariance matrix between $W'_t$ and $W_t$, which are the Brownian motions in the definitions of $\tau(t)$ and $x_t$ , 
then the quadratic variation below evaluates as:
$
    \E\left[\langle AdW_\tau, Bx(\tau)d\tau\rangle\right] = \eps A\rho x(t)^TB^T dt\ .
$
\end{lemma}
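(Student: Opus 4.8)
The plan is to observe that, relative to the proof of Theorem~\ref{clock:thm}, \emph{only} the cross term $\E\left[\langle AdW_\tau, Bx(\tau)d\tau\rangle\right]$ changes when we drop the independence assumption. The two diagonal quadratic-variation terms are untouched: $\E[\langle AdW_\tau, AdW_\tau\rangle]=AA^Td\tau$ depends only on the self-variation of $W$, and $\E[\langle Bx(\tau)d\tau, Bx(\tau)d\tau\rangle]=\eps^2Bx(\tau)x(\tau)^TB^T\,dt$ comes only from $(d\tau)^2=\eps^2(dW'_t)^2=\eps^2\,dt$. So I would reuse the Itô setup of that proof verbatim and recompute Equation~\ref{cov:eqn} alone, this time allowing correlation between $W'_t$ and $W_t$.

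First I would expand the clock increment $d\tau(t)=\delta\,dt+\eps\,dW'_t$ inside the cross term, writing the quadratic covariation as the matrix $\langle AdW_\tau, Bx(\tau)d\tau\rangle = A\,dW_\tau\,d\tau\,x(\tau)^TB^T$. Applying the Itô multiplication table ($dt\cdot dt=0$ and $dt\cdot dW_\tau=0$) kills the drift part $\delta\,dt$, leaving $dW_\tau\,d\tau=\eps\,dW_\tau\,dW'_t$. Everything thus reduces to the single cross-variation $dW_\tau\,dW'_t$ between the $n$-dimensional noise driving $x$ and the scalar noise driving the clock. By the definition of $\rho$ as the instantaneous covariance between $W'_t$ and $W_t$, this equals $\rho^T\,dt$ (read as a vector in $\RR^n$; the stated product $A\rho$ in the lemma is this column-vector convention). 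Substituting back gives $\langle AdW_\tau, Bx(\tau)d\tau\rangle = \eps\,A\rho\,x(\tau)^TB^T\,dt$, and the conditional expectation leaves this deterministic expression unchanged (the $dW'_t$-pieces integrate to zero-mean martingales, exactly as before), yielding the claimed identity with $x(\tau)\to x(t)$.

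The main obstacle is the careful justification of the step $dW_\tau\,dW'_t=\rho^T\,dt$: the process $W$ is being sampled along the time-changed clock $\tau(t)$ while $W'$ advances in real time $t$, so one must argue that the instantaneous correlation $\rho$ survives the random time change and enters linearly in $dt$ (with the $\eps$ factor supplied by $d\tau$). The clean way to handle this is to condition on the clock path $\tau(t)$, exactly as in Theorem~\ref{clock:thm}: conditioning a Gaussian on a Gaussian keeps the joint increments $(dW_\tau, dW'_t)$ jointly Gaussian over the infinitesimal real-time window, with the prescribed covariance $\rho$, so the cross term can be read off directly. Once this is in place, the remainder is identical bookkeeping to the independent case, and collecting the three quadratic-variation terms recovers the analogue of the displayed identity in Theorem~\ref{clock:thm} with the extra correction $\eps\,A\rho\,x(t)^TB^T\,dt$.
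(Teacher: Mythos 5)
Your proposal is correct and follows essentially the same route as the paper, whose entire proof is the one-line remark that the identity ``follows immediately from the definition of quadratic variation'': your It\^o-table computation ($dt\cdot dt=0$, $dt\cdot dW_\tau=0$, $dW_\tau\,dW'_t=\rho^T dt$ after conditioning on the clock path) is exactly the calculation that remark compresses, and your reading of $A\rho$ under the column-vector convention matches the only dimensionally consistent interpretation of the statement.
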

The proof follows immediately from the definition of  quadratic variation. Therefore, following a similar proof as Theorem~\ref{clock:thm}, the error to $\tilde{A}\tilde{A}^T$ from Theorem~\ref{clock:thm} is changed by an additive factor that is $O(\eps \max_{i.j}\{\rho_{ij}\})$. So when all coordinates of $\eps \rho$ are small, the characterization of $\tilde{A}\tilde{A}^T$ in Theorem~\ref{clock:thm} continues to hold.

Theorem~\ref{clock:thm} shows that the covariance matrix $AA^T$ in Algorithm~\ref{algmain} is determined by an approximate clock as long as $\frac{\epsilon}{\delta}$ is small. In the following subsections, we verify the same fact empirically.

\subsection{Empirical verification using synthetic data}\label{sec:synth}

First, we use synthetic data to plot the error in determining the matrix $AA^T$, i.e., $AA^T-\tilde{A}\tilde{A}^T$, as a function of the number of errors in the cell sample orderings. Note that the latter decreases as $\frac{\epsilon}{\delta}\to 0$. In Figure~\ref{synth:fig}, we plot the Frobenius norm of $AA^T-\tilde{A}\tilde{A}^T$ as a function of the number of errors (equal to the number of inversions) in ordering.

\begin{figure}[htb!]
    \centering
        \centering
        \includegraphics[height=2.0in]{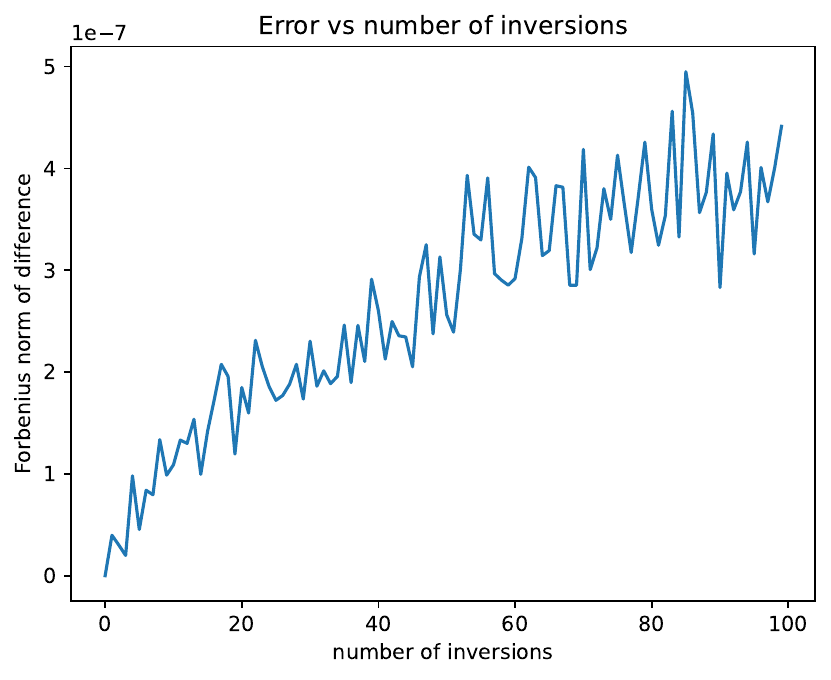}
        \caption{Error in determining covariance matrix as a function of the amount of error in the construction of the approximate clock. Note that the Frobenius norm of the difference is small and does not grow too quickly, and is close to zero near zero.}
        \label{synth:fig}
\end{figure}

In order to generate the plot in Figure~\ref{synth:fig}, we used an OU process with a $10 \times 10$ identity matrix as a stand-in for $A$; we set $B:= -Id+G$, where $G$ is a random Gaussian matrix, and we used time-steps of size $0.0001$ to generate the sample path for $1000$ time-steps. We introduce $k$ errors/inversions (the $x$ coordinate in Figure~\ref{synth:fig}) in the sample path by picking $k$ pairs of points, uniformly at random, in the sample path and swapping the pairs. Finally, we compute the corresponding empirical covariance matrix: $\tilde{A}(k)\tilde{A}(k)^T$, from the perturbed path with $k$ inversions, the corresponding Frobenius norm of $\tilde{A}(k)\tilde{A}(k)^T-AA^T$ and plot the latter as a function of $k$.

\subsection{The approximate clock using single cell RNA-seq datasets}\label{app:sec}

\begin{figure}[htb]
    \centering
    \begin{subfigure}[t]{0.5\textwidth}
        \centering
        \includegraphics[height=1.8in]{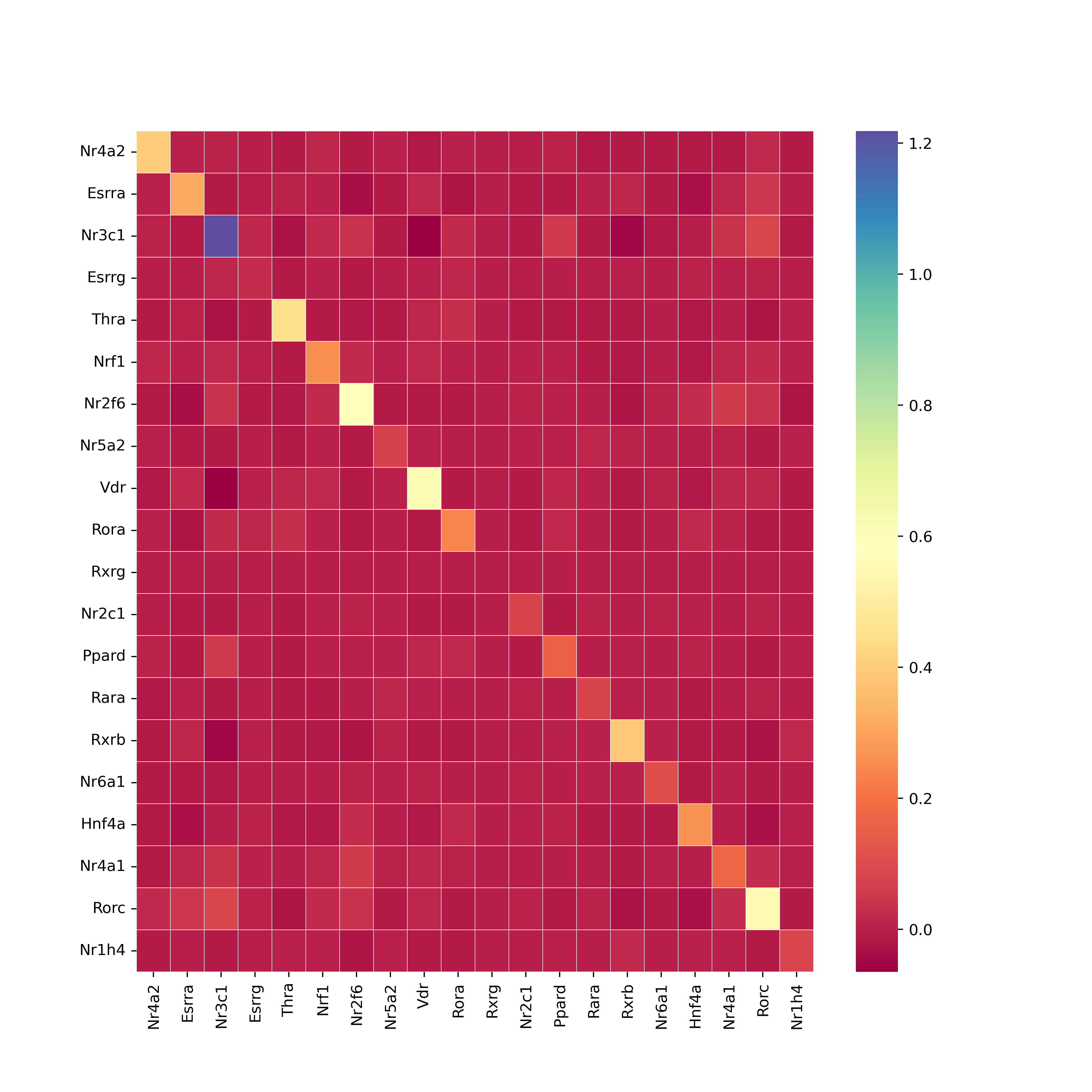}
        \caption{Covariance matrix of increments obtained from latent time computation}
    \end{subfigure}%
    ~ 
    \begin{subfigure}[t]{0.5\textwidth}
        \centering
        \includegraphics[height=1.8in]{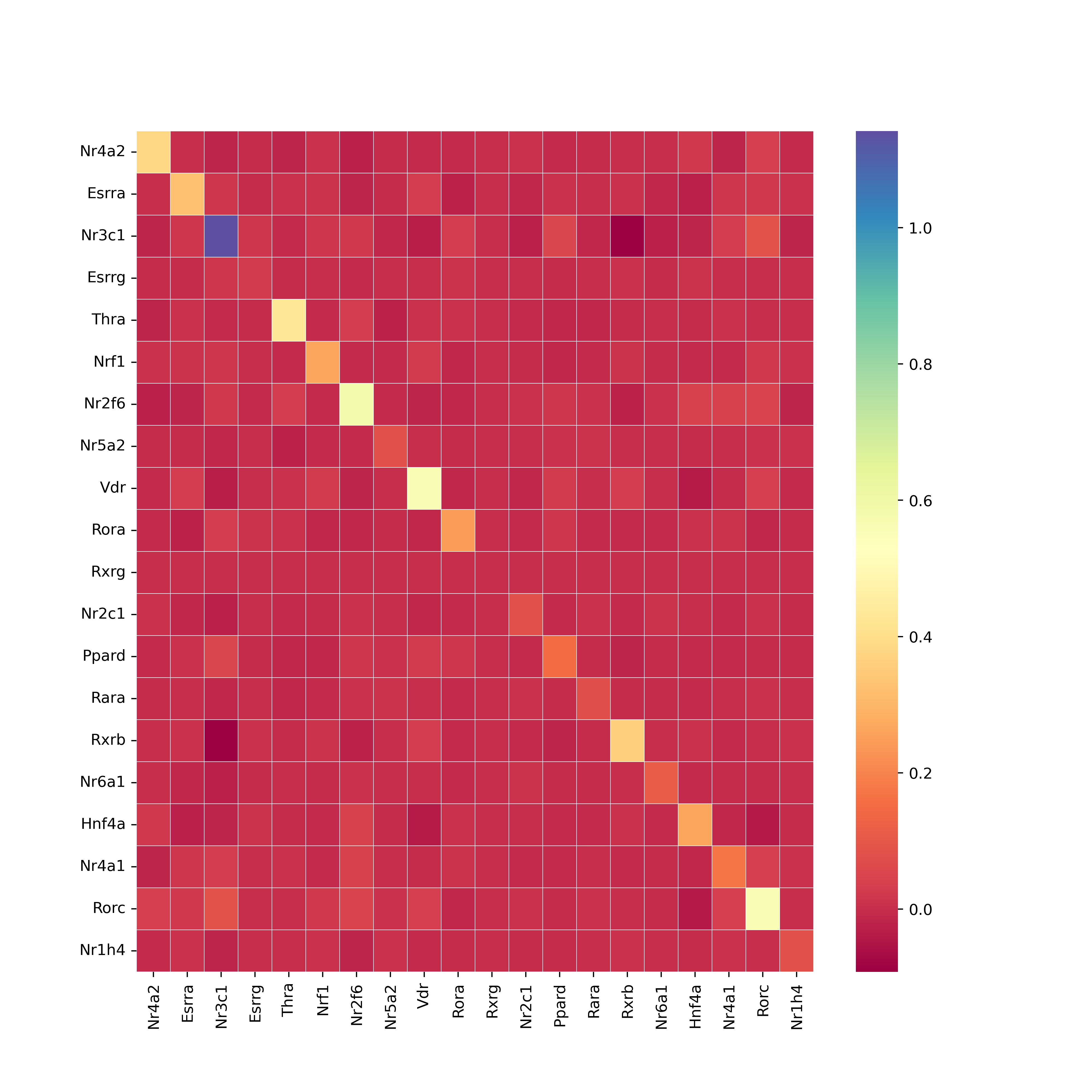}
        \caption{Covariance matrix of increments obtained from Cyclin clock}
    \end{subfigure}
    \centering
    \begin{subfigure}[b]{0.45\textwidth}
        \centering
        \includegraphics[height=2in]{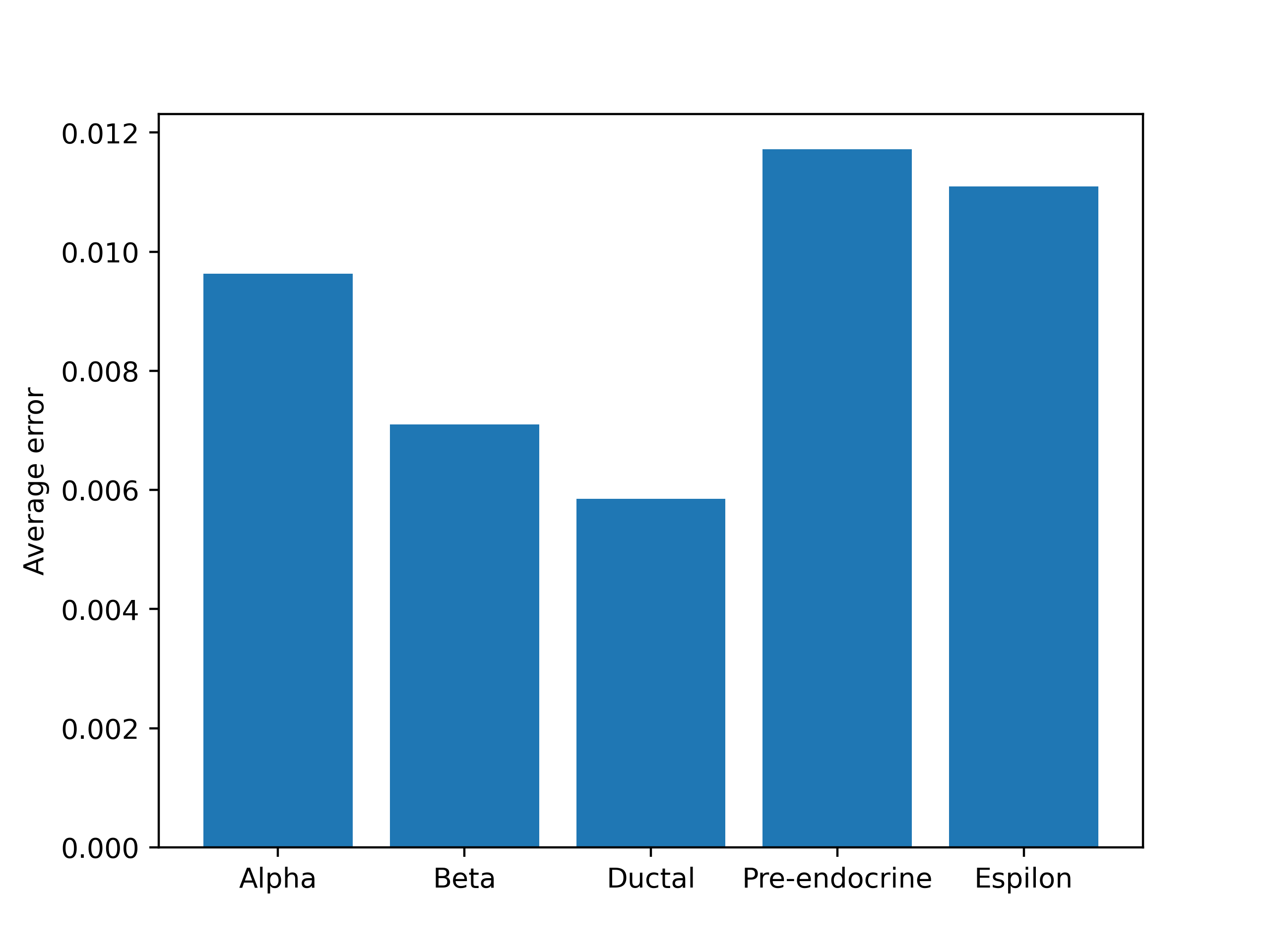}
        \caption{Average entry-wise difference between the recovered dynamical system matrix for various cell clusters}
    \end{subfigure}
    \begin{subfigure}[b]{0.45\textwidth}
        \centering
        \includegraphics[height=2in]{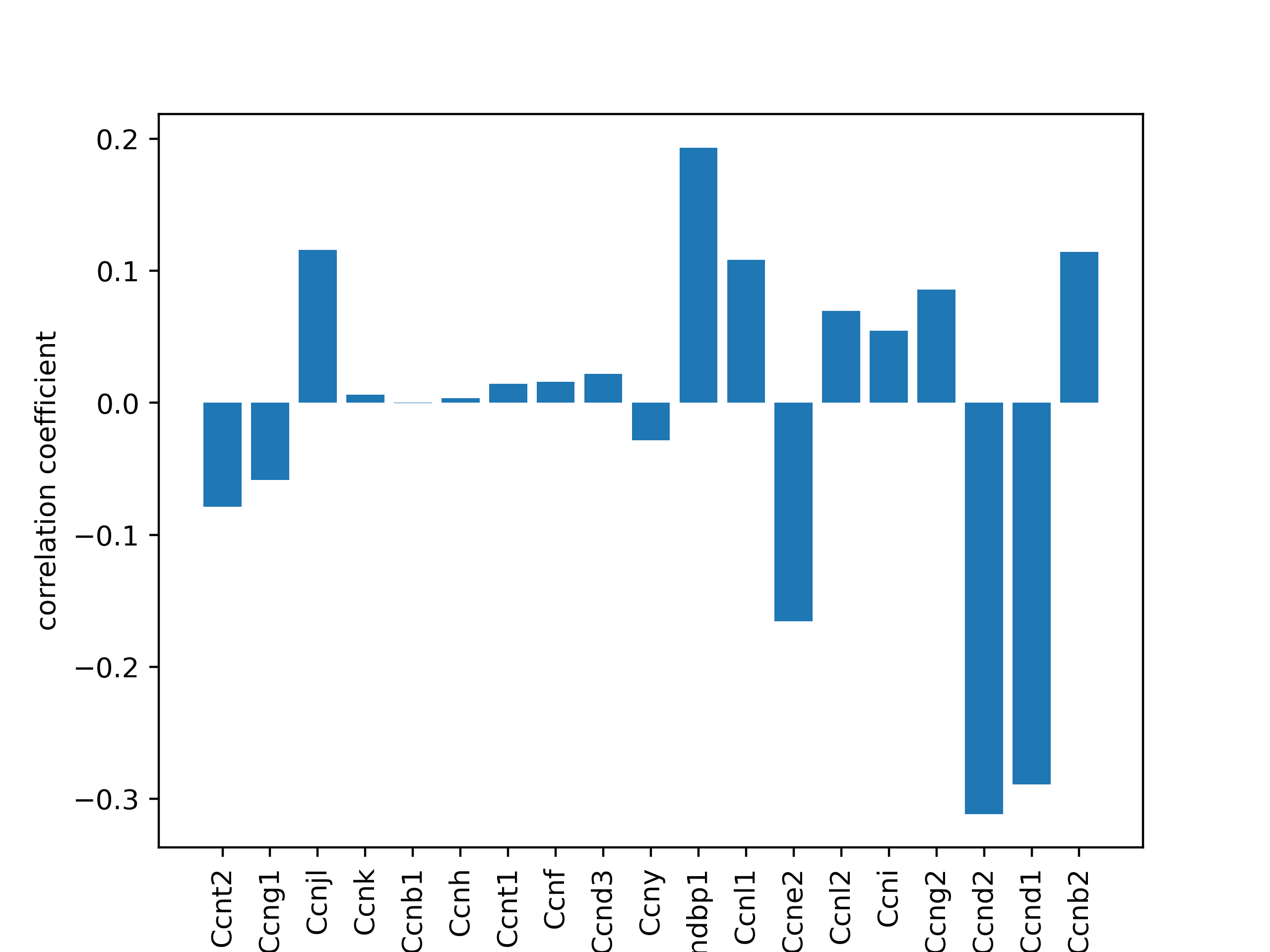}
        \caption{Correlations between cyclin expressions and scVelo latent time}
    \end{subfigure}
    \caption{Robustness of Cyclin clock in single cell pancreatic RNA-seq data}\label{scpancreas:fig}
\end{figure}


    


It can be argued that neither Theorem~\ref{clock:thm} nor Figure~\ref{synth:fig} uses cyclin expressions anywhere. Perhaps using cyclin expressions to order cells leads to a significantly large number of errors in cell sample orderings of RNA-seq data. In this subsection, we allay that fear as well. 

Yet another way to verify that the approximate clock using cyclin expressions does not lead to a large error in-between $AA^T$ and $\tilde{A}\tilde{A}^T$ is to use single cell RNA-seq data. Such datasets admit an ordering of cell samples using latent time based on underlying RNA velocity models. In this subsection, we verify that $AA^T$ (determined using latent time) and $\tilde{A}\tilde{A}^T$ (determined using a simple ordering using cyclin expressions) are close, based on single cell RNA-seq dataset from murine pancreatic tissue~\cite{Bastdas-Ponce}.


We compute the covariance matrix of increments, among significantly expressed nuclear receptors in the murine pancreatic scRNA-seq dataset from~\cite{Bastdas-Ponce}, in two ways.
First, we compute the matrix $\tilde{A}\tilde{A}^T$ using an approximate clock that involves ordering the cell samples by the difference in between two cyclin expressions, say Cyclins I and D, i.e., $\tau(c):=\mathrm{Cyclin\ I} - \mathrm{Cyclin\ D}$ for a given cell $c$. Cyclin I and D represent the optimal choice based on the correlations in Figure~\ref{scpancreas:fig}(d).
Second, we compute the matrix $AA^T$ by using the latent time function in scVelo as our ordering ( ground truth). 
We use the total RNA expression levels data for nuclear receptors in the five groups: alpha, beta, epsilon, pre-endocrine and ductal cells in the murine pancreatic single cell RNA-seq dataset~\cite{Bastdas-Ponce}. Thus we obtain five instances of $AA^T$ and 
of $\tilde{A}\tilde{A}^T$. 

As can be seen in Figure~\ref{scpancreas:fig}(c), the error measured as the average of entry-wise absolute difference of the computed matrix i.e., $AA-\tilde{A}\tilde{A}^T$ is indeed small, much smaller than the variances (diagonals in the heatmaps), thus leading credence to our claim that {\em covariance matrices of increments are robust to few errors in sample orderings}.

We performed a similar analysis on the murine dentate gyrus dataset of~\cite{dg}. The covariance matrices computed in two ways (using scVelo latent time and cyclin expression based ordering) are shown in Figure~\ref{dg-fig}.
\begin{figure}[htb]
    \centering
    \begin{subfigure}[t]{0.5\textwidth}
        \centering
        \includegraphics[height=2.25in]{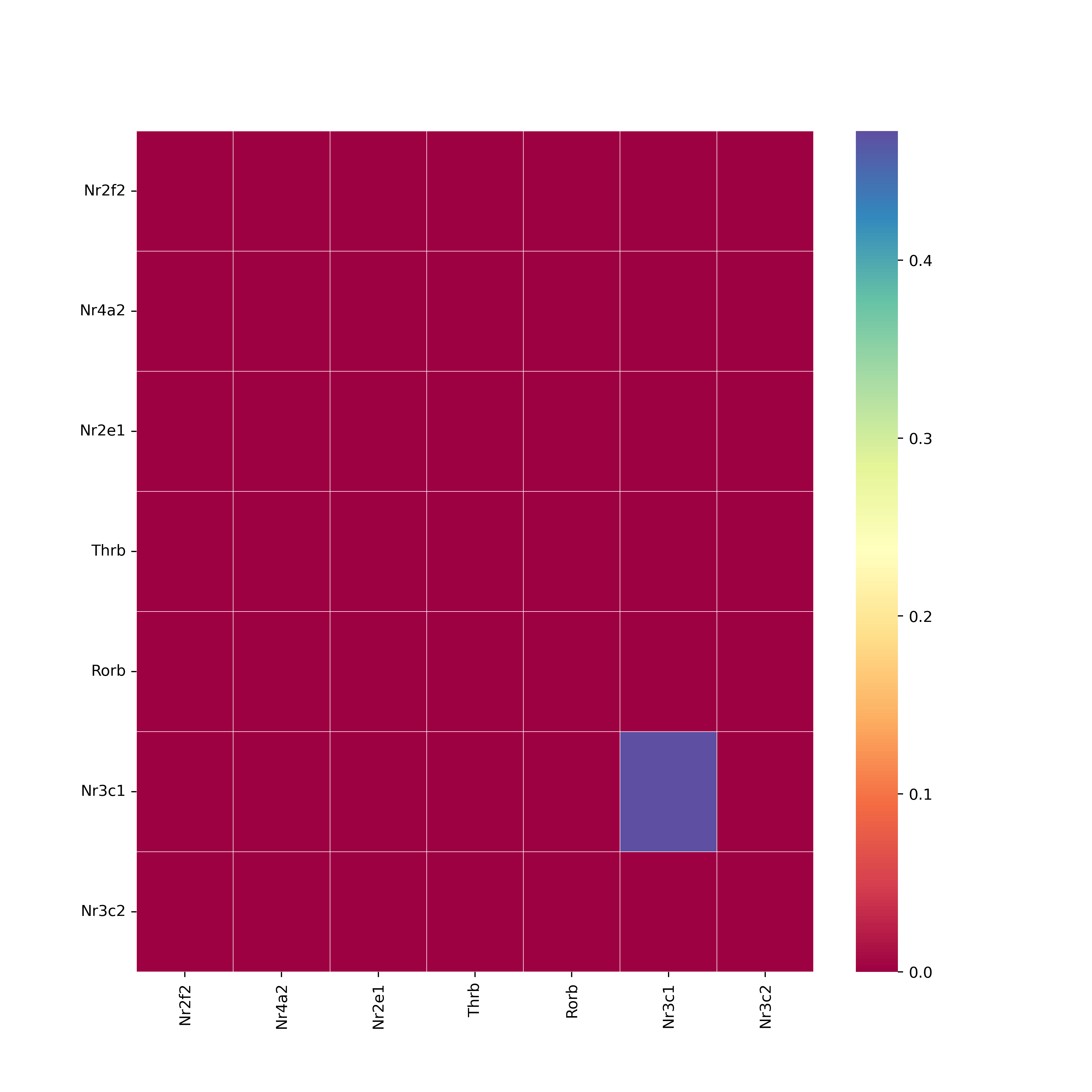}
        \caption{Covariance matrix of increments obtained from latent time computation}
    \end{subfigure}%
    ~ 
    \begin{subfigure}[t]{0.5\textwidth}
        \centering
        \includegraphics[height=2.25in]{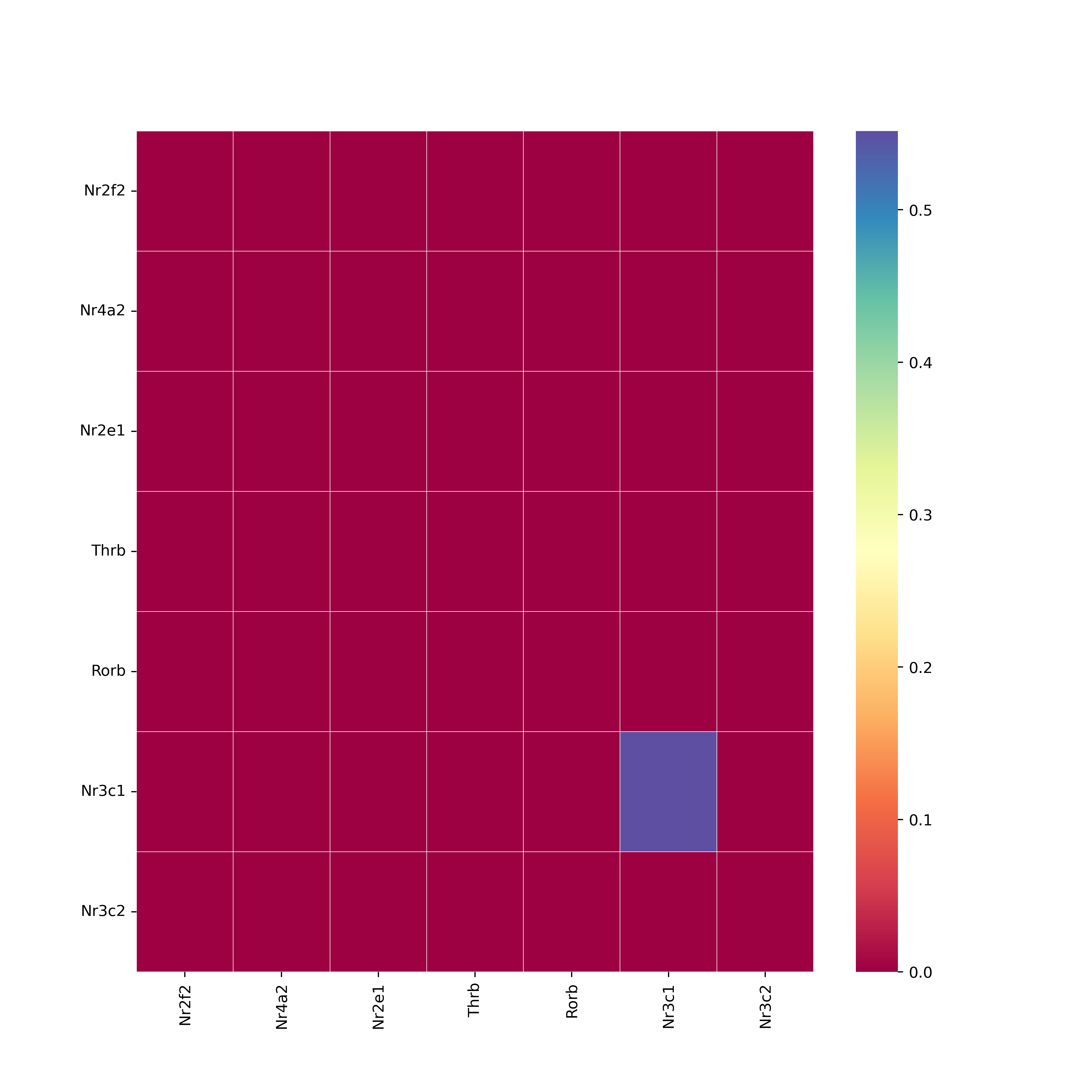}
        \caption{Covariance matrix of increments obtained from Cyclin clock}
    \end{subfigure}
    \caption{Robustness of Cyclin clock in single cell dentate gyrus RNA-seq data (note that the other nuclear receptors were not well expressed hence were excluded)}\label{dg-fig}
\end{figure}
The covariance matrices thus computed are virtually identical, again demonstrating the effectiveness of the simple cyclin based ordering of samples, as far as computation of covariance matrices is concerned. Our next step is to demonstrate the use of these covariance matrices in the computation of pathways, in Figure~\ref{paths:fig}, in the next section.
\section{Proofs from Subsection~\ref{th2:sbs}}\label{thm2:sec}


\begin{theorem}\label{basket:thm}
The objective computed by Algorithm~\ref{basket:algo} is equal in expectation to that in Equation~\ref{eqn:obj_lyp}.
\end{theorem}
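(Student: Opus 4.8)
The plan is to prove the statement via linearity of expectation applied to the returned objective, after first putting the semidefinite relaxation on firm duality footing; the whole argument hinges on the fact that, once the SDP optimizers are frozen, the quantity returned in Step~5 is a \emph{linear} function of the rounded matrix $RR^T$.

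First I would establish a saddle point and strong duality. The relaxed objective in Equation~\ref{eqn:obj_lyp2} is bilinear --- linear in $(X,Z)$ for fixed $Y$ and linear in $Y$ for fixed $(X,Z)$ --- and the constraint sets in Equations~\ref{eqn:constr_lyp3},~\ref{eqn:constr_lyp4} and~\ref{eqn:constr_lyp5} are compact and convex, so Sion's minimax theorem furnishes a saddle point $(X^*,Y^*,Z^*)$ with common value $\mathsf{\tilde{O}}$. I would check Slater's condition (for instance $Y=\tfrac12 I$ together with positive-definite $X,Z$ of trace strictly below one are strictly feasible), which gives SDP strong duality and hence $\mathsf{\tilde{O}}=\mathsf{\bar{O}}$, the dual in Equation~\ref{eqn:obj_lyp3} that the algorithm actually solves.

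The heart of the proof is the next step. Writing $M'=\tfrac{B'+B'^T}{2}$ and $M=\tfrac{B+B^T}{2}$, the value returned in Step~5 is the relaxed objective evaluated at the rounded rank-one $0$-$1$ matrix $RR^T$ with the optimizers $X^*,Z^*$ held fixed, namely $\Phi(R)=\langle M'\odot X^*,\,RR^T\rangle-\alpha\,\langle M\odot Z^*,\,RR^T\rangle$, which is linear in the entries of $RR^T$. By the definition of $\mathrm{Ber}(Y^*)$ the rounded vector satisfies $\E[R_iR_j]=Y^*_{ij}$, i.e.\ $\E[RR^T]=Y^*$; hence linearity of expectation gives $\E[\Phi(R)]=\langle M'\odot X^*,\,Y^*\rangle-\alpha\,\langle M\odot Z^*,\,Y^*\rangle=\mathsf{\tilde{O}}$. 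Thus the randomized rounding reproduces the SDP saddle value exactly in expectation, with no integrality loss, precisely because the eigenvalue maximizations have been replaced by the fixed dual certificates $X^*,Z^*$.

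It remains to connect $\mathsf{\tilde{O}}$ to the integral optimum $\mathsf{O}$ of Equation~\ref{eqn:obj_lyp}, and this is where I expect the real work. Every $0$-$1$ set $S$ contributes a feasible rank-one $Y=SS^T$, so the relaxed minimization satisfies $\mathsf{\tilde{O}}\le\mathsf{O}$; the reverse inequality must come from arguing that the support $S=\mathrm{supp}(R)$ of a rounded sample realizes the original objective value, which forces a careful reconciliation of the inner optimization directions (the relaxation's joint $\max$ over $X,Z$ selects $\lambda_{\max}(M'\odot Y)$ but $\lambda_{\min}(M\odot Y)$ for the two terms, whereas Equation~\ref{eqn:obj_lyp} uses $\lambda_{\max}$ in both). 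I would also need to justify that a binary law $\mathrm{Ber}(Y^*)$ with the prescribed first and second moments genuinely exists --- positive semidefiniteness and $Y^*_{ii}\in[0,1]$ are necessary but the moment problem for $0$-$1$ vectors imposes further correlation constraints --- since the clean expectation identity above is only meaningful once this distribution is in hand.
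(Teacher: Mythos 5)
Your plan reproduces the architecture of the paper's own proof: existence of a saddle point for the concave--convex relaxation (you via Sion's theorem, the paper via von Neumann's), strong SDP duality (you via Slater's condition, the paper via perturbing the rank-one optimizers to full rank), and then the key observation that with the inner optimizers $X^*,Z^*$ frozen the returned quantity is linear in $RR^T$, so that $\E[RR^T]=Y^*$ and linearity of expectation return the saddle value $\tilde{\mathsf{O}}$. Up to that point your write-up is, if anything, tighter than the paper's.

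However, the two items you defer are exactly where the theorem's content lies, and both are genuine gaps rather than routine verifications. (i) \emph{The bridge $\tilde{\mathsf{O}}=\mathsf{O}$.} Writing $M'=\tfrac{B'+B'^T}{2}$, $M=\tfrac{B+B^T}{2}$ and $M_S=\mathrm{Diag}(S)\,M\,\mathrm{Diag}(S)$, at an integral point $Y=ss^T$ the joint inner maximization in Equation~\ref{eqn:obj_lyp2} evaluates the second term as $-\alpha\min(\lambda_{\min}(M_S),0)$ (this is also what Constraint~\ref{eqn:constr_lyp7} of the dual encodes, since $v\ge\lambda_{\max}(-\alpha M\odot Y)=-\alpha\lambda_{\min}(M\odot Y)$), whereas Equation~\ref{eqn:obj_lyp} has $-\alpha\max(\lambda_{\max}(M_S),0)$. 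So the relaxation's value at integral points strictly exceeds the Equation~\ref{eqn:obj_lyp} value whenever $M_S$ has an eigenvalue of either sign, and nesting of feasible sets yields \emph{neither} $\tilde{\mathsf{O}}\le\mathsf{O}$ nor the reverse; without repairing this sign flip, $\E[\text{returned value}]=\tilde{\mathsf{O}}$ cannot be converted into $\E[\text{returned value}]=\mathsf{O}$. (ii) \emph{Existence of $\mathrm{Ber}(Y^*)$.} A $0$-$1$ vector forces $\max(0,Y_{ii}+Y_{jj}-1)\le Y_{ij}\le\min(Y_{ii},Y_{jj})$, which positive semidefiniteness plus $Y_{ii}\le 1$ does not imply: take $Y_{11}=Y_{22}=1$, $Y_{12}=0$; binary variables with mean $1$ are almost surely $1$, so $\E[R_1R_2]=1\ne 0$. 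When no such law exists, $\E[RR^T]=Y^*$ is vacuous and your central linearity step collapses. You should also know that the paper's proof does not close either gap: it asserts the relaxation inequality without addressing the $\lambda_{\min}$/$\lambda_{\max}$ flip; it "constructs" $\mathrm{Ber}(Y^*)$ by taking the coordinates $R_i\sim\mathrm{Ber}(Y^*_{ii})$ \emph{independent}, which gives $\E[R_iR_j]=Y^*_{ii}Y^*_{jj}\ne Y^*_{ij}$ off the diagonal and so contradicts the moment requirement its own rounding needs; and its final claim that the frozen-certificate value at any $0$-$1$ point is at least $\mathsf{O}$ does not follow, because freezing $X^*$ can only decrease the first term. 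In short, you have honestly flagged as open precisely the steps that are also unsound in the paper; a correct proof would need either a reformulated relaxation whose inner problem matches Equation~\ref{eqn:obj_lyp} at integral points, or a rounding scheme (and accompanying distribution) whose expected value can be compared to $\mathsf{O}$ directly.
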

\begin{proof}
First, we need to verify that the relaxation with the objective in Equation~\ref{eqn:obj_lyp2} is indeed a relaxation of Equation~\ref{eqn:obj_lyp}. Note that, Equation~\ref{eqn:obj_lyp} contains a Hadamard product with the matrix $\mathrm{Diag}(S)$, which is replaced with a Hadamard product with the positive semidefinite matrix $Y$ (with $Y_{ii}\le 1$) in Equation~\ref{eqn:obj_lyp2}. Since the former is a special case of the latter, and since the $\max$ term involve the standard SDP for computing the maximum eigenvalue, any solution to the former is also a solution for the latter. Thus, the objective in Equation~\ref{eqn:obj_lyp2} is indeed a relaxation of Equation~\ref{eqn:obj_lyp}, and its value is no greater than that of Equation~\ref{eqn:obj_lyp}.

Second, we need to verify that the optimal equilibrium matrix $Y^*$ exists for Equation~\ref{eqn:obj_lyp2}. An optimal equilibrium solution exists, because the min-max problem in Equations~\ref{eqn:obj_lyp2},~\ref{eqn:constr_lyp3}, and~\ref{eqn:constr_lyp4} can be equivalently viewed as a concave-convex pay-off of a zero-sum game with convex constraints (see for example~\cite{tes}). Therefore, by Von Neumann's min-max theorem an equilibrium solution exists. Next, for a fixed choice of optimal $X^*, Z^*$, observe that, $Y$ minimizes the eigenvalue of a symmetric matrix: $\frac{B'+B'^T}{2}\odot X -\alpha\frac{B+B^T}{2}\odot Z$. 

Third, we need to verify that strong SDP duality holds (with only minor additive error), so that the objective value and the matrix $Y^*$ from our SDP solution (Equation~\ref{eqn:obj_lyp3}) equates to the one in Equation~\ref{eqn:obj_lyp2}. In order for strong SDP duality to hold, the solution matrices $X,Y,Z$ have to be full-rank. That is not true in our problem, as they will be all rank $1$. However, perturbing them by a very small multiple of the identity matrix will make them positive definite. That will also not effect the equilibrium by more than an arbitrary small additive parameter that we can chose.

Fourth, we need to verify that $Y$ can be written as the moment matrix of first and second moments a random vector $R$ consisting of Bernoulli random variable entries, since we use a Bernoulli vector for the rounding in Algorithm~\ref{basket:algo}. This is indeed the case, as we can choose $R_i=\mathrm{Ber}(Y_{ii})$, and $R_i$ independent of $R_j$ for $i\neq j$. It is immediate that $Y = \mathbf{E}[RR^T]$, where the expectation is applied entry-wise.

Finally, we need to verify that the Bernoulli rounding in Algorithm~\ref{basket:algo} does not make the resulting approximate solution of Equation~\ref{eqn:obj_lyp} too large in expectation. Let $Y^*$ be the equilibrium solution for Equation~\ref{eqn:obj_lyp3} after rounding, and let $X^*$ and $Z^*$ be the corresponding solutions for the objective in Equation~\ref{eqn:obj_lyp2} (which is upper bounded by the value of Equation~\ref{eqn:obj_lyp}). Observe that, for fixed $X^*, Z^*$, we can write the objective as an expectation over the quadratic form in the entries of the $0$-$1$ valued vector $R\in\RR^n$. Since $R$ is $0$-$1$ solution, the expected value of this quadratic form equals at least that of Equation~\ref{eqn:obj_lyp} (which is the true $0$-$1$ valued optimum). Therefore, the expected value of the SDP solution equals that of Equation~\ref{eqn:obj_lyp}. Hence the proof follows. $\square$
\end{proof}

\clearpage
\section{Recovered interaction and perturbation matrices} 
\begin{figure}[htb!]
    \centering
    \begin{subfigure}[t]{0.45\textwidth}
        \centering
        \includegraphics[height=3.1in]{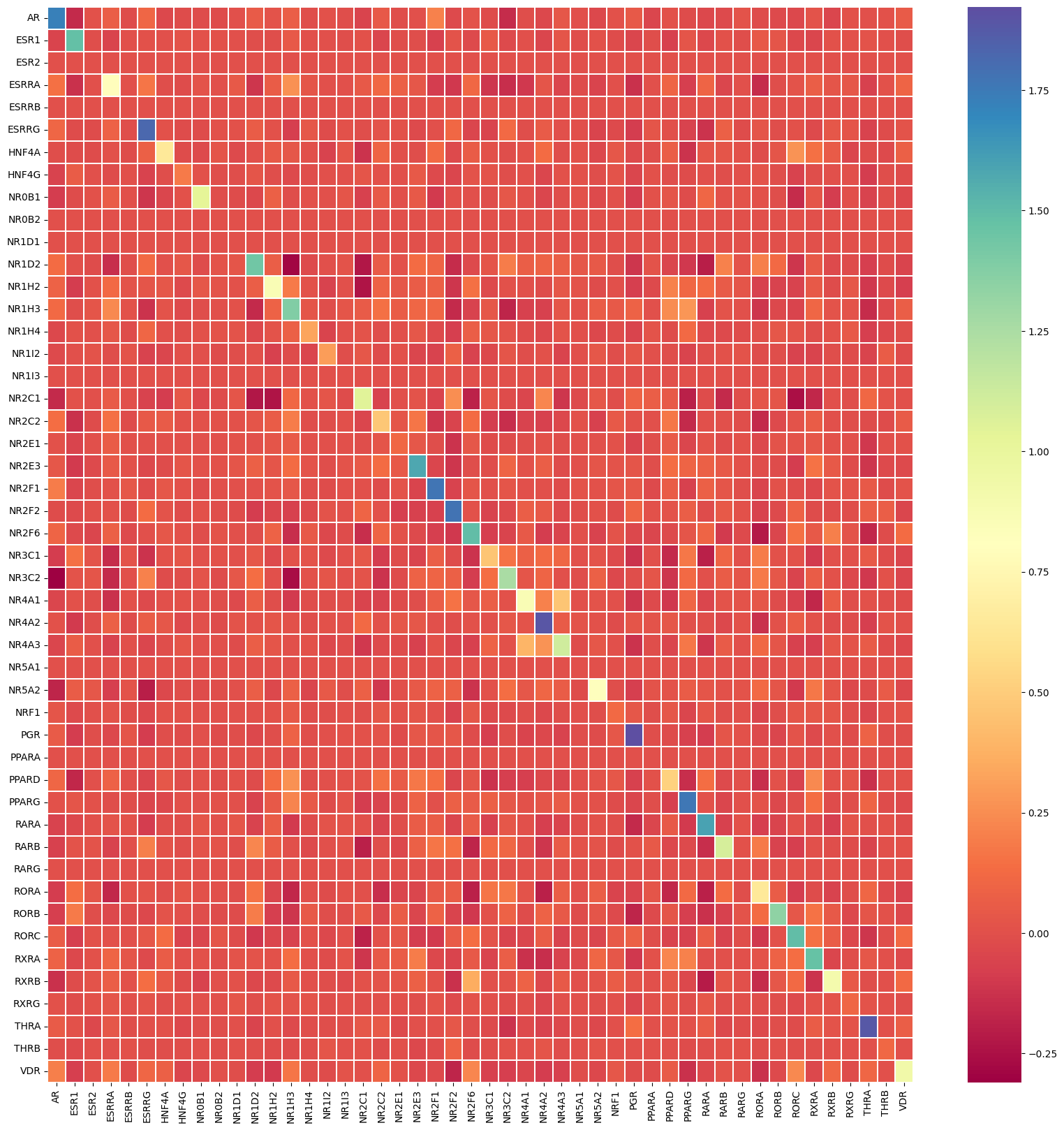}
        \caption{Underlying dynamical system for non-metastatic cases}
    \end{subfigure}\\%
    ~ 
    \begin{subfigure}[t]{0.45\textwidth}
        \centering
        \includegraphics[height=3.1in]{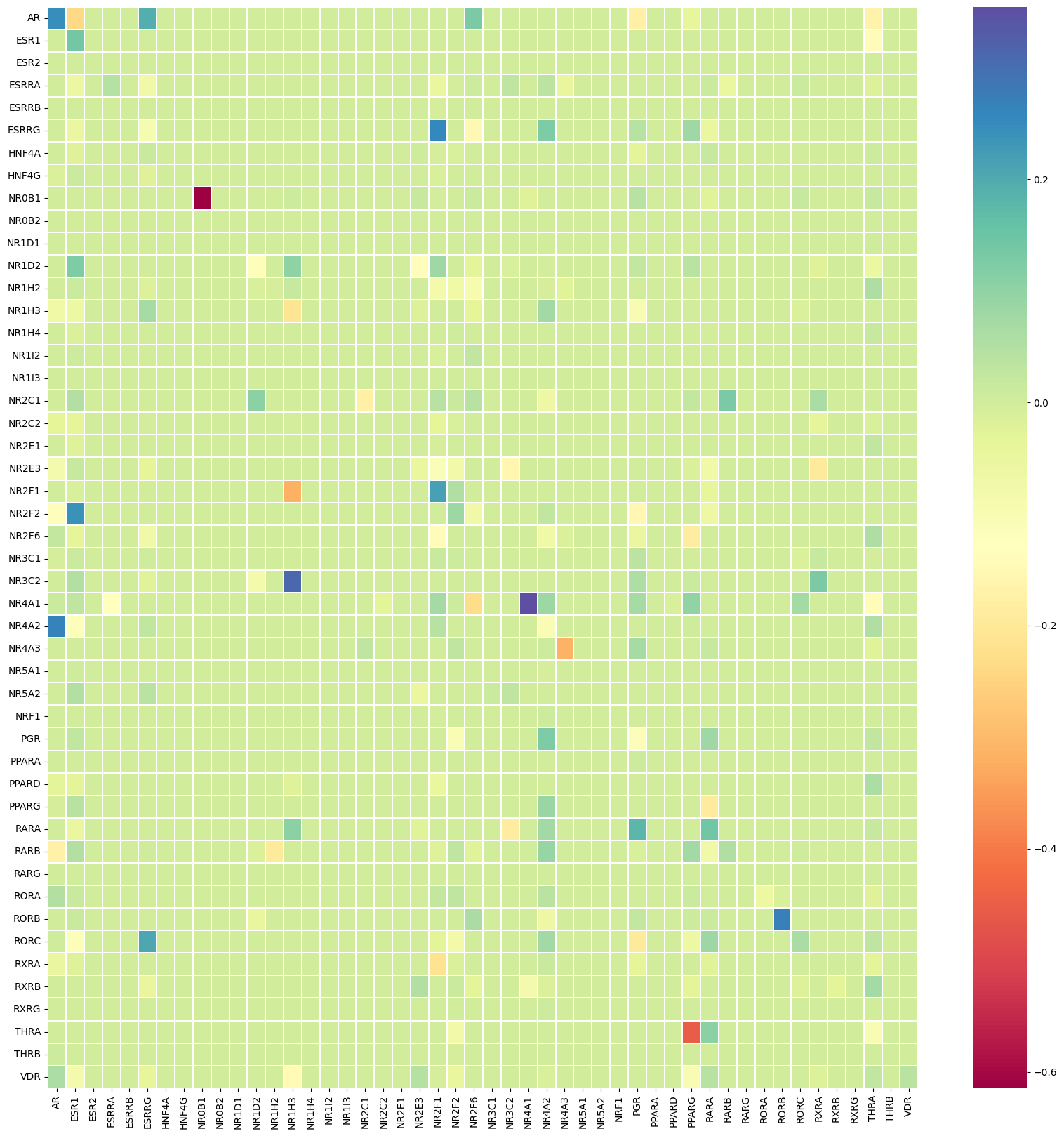}
        \caption{Perturbations of GRN towards metastasis. The matrix represents the first order approximation of changes in interaction between genes in presence of metastasis. Values far from $0$ (yellow) represent a prediction of significant up or down regulation in an interaction between a pair of genes. }
    \end{subfigure}
    \caption{Recovered dynamical system and perturbation for breast cancer dataset~\cite{metabric,metabric2,metabric3}.}\label{dyn:fig}
\end{figure}

\begin{figure}[htb!]
    \centering
    \begin{subfigure}[t]{0.45\textwidth}
        \centering
        \includegraphics[height=3.6in]{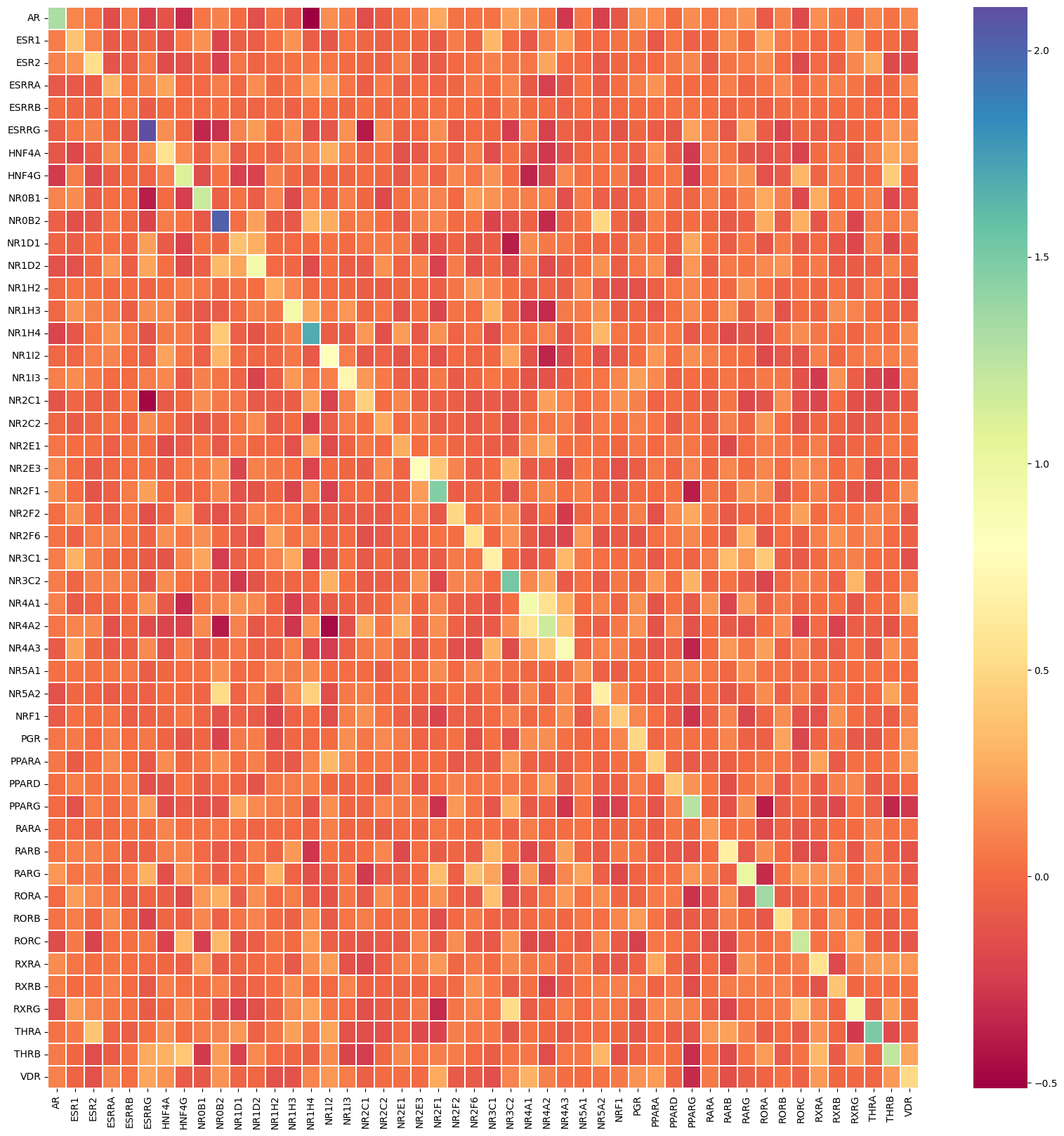}
        \caption{Underlying dynamical system for non-metastatic cases}
    \end{subfigure}\\%
    ~ 
    \begin{subfigure}[t]{0.45\textwidth}
        \centering
        \includegraphics[height=3.6in]{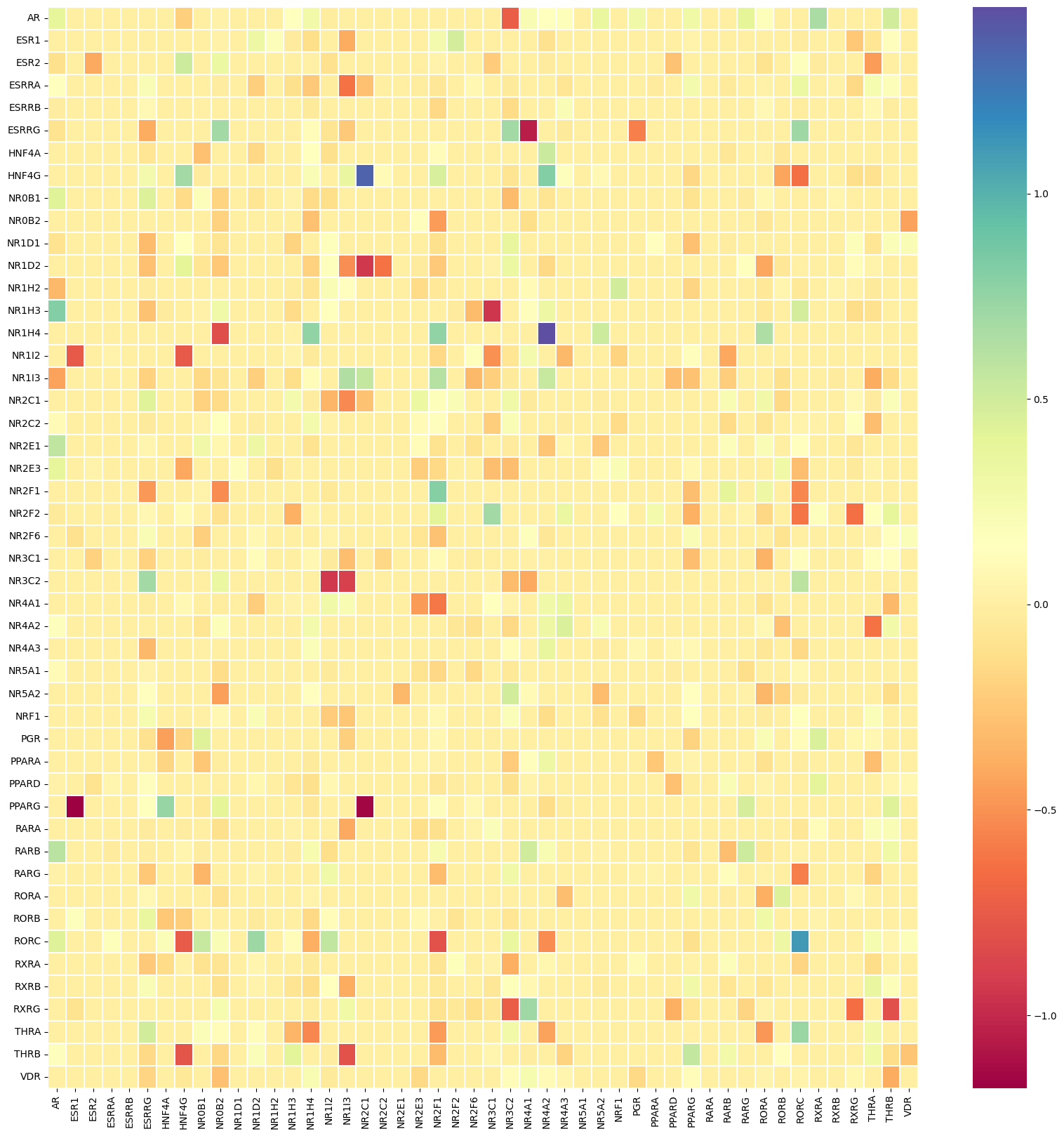}
        \caption{Perturbations of GRN towards metastasis. The matrix represents the first order approximation of changes in interaction between genes in presence of metastasis. Values far from $0$ (yellow) represent a prediction of significant up or down regulation in an interaction between a pair of genes. }
    \end{subfigure}
    \caption{Recovered dynamical system and perturbation for colorectal cancer dataset~\cite{rc}.}\label{dyn2:fig}
\end{figure}


\end{document}